\newtheorem{theorem}{Theorem}[section]
\newtheorem{lemma}{Lemma}[section]
\newtheorem{definition}{Definition}[section]
\newtheorem{proposition}{Proposition}[section]
\newcommand{\N}{\mathbb{N}}
\newcommand{\R}{\mathbb{R}}
\begin{document}
\title{Thresholds for hanger slackening and cable shortening\\
in the Melan equation for suspension bridges}
\author{Filippo Gazzola - Gianmarco Sperone\\
{\small Dipartimento di Matematica, Politecnico di Milano, Italy}}
\date{}
\maketitle
\begin{abstract}
The Melan equation for suspension bridges is derived by assuming small displacements of the deck and inextensible hangers. We determine the thresholds for
the validity of the Melan equation when the hangers slacken, thereby violating the inextensibility assumption. To this end, we preliminarily study the possible
shortening of the cables: it turns out that there is a striking difference between even and odd vibrating modes since the former never shorten.
These problems are studied both on beams and plates.
\end{abstract}

\section{Introduction}
In 1888, the Austrian engineer Josef Melan \cite{melan1906theory} introduced the so-called deflection theory and applied it to derive the differential
equation governing a suspension bridge, modeled as a combination of a string (the sustaining cable) and a beam (the deck), see Figure \ref{bridgemodel}.
The beam and the string are connected through hangers. Since the spacing between hangers is usually small relative to the span, the set of the hangers is
considered as a continuous membrane connecting the cable and the deck.

\begin{figure}[H]
\begin{center}
\includegraphics[height=40mm, width=120mm]{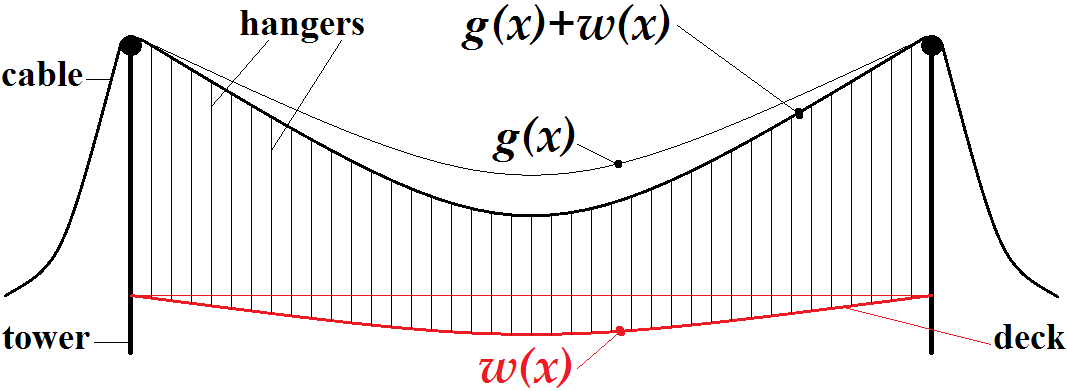}
\caption{Beam (red) sustained by a cable (black) through parallel hangers.}\label{bridgemodel}
\end{center}
\end{figure}
Let us quickly outline how the Melan equation is derived; we follow here \cite[VII.1]{von1940mathematical}.
We denote by\par\noindent
$L$ the length of the beam at rest (the distance between towers) and $x\in(0,L)$ the position on the beam;\par\noindent
$p=p(x)$ the live load and $-q<0$ the dead load per unit length applied to the beam;\par\noindent
$g=g(x)$ the displacement of the cable due to the dead load $-q$;\par\noindent
$L_c$ the length of the cable subject to the dead load $-q$;\par\noindent
$A$ the cross-sectional area of the cable and $E_c$ its modulus of elasticity;\par\noindent
$H$ the horizontal tension in the cable, when subject to the dead load $-q$ only;\par\noindent
$EI$ the flexural rigidity of the beam;\par\noindent
$w=w(x)$ the displacement of the beam due to the live load $p$;\par\noindent
$h=h(w)$ the additional tension in the cable produced by the live load $p$.\par\smallskip

When the system is only subject to the action of dead loads, the cable is in position $g(x)$ while the unloaded beam is in the horizontal position $w\equiv0$,
see Figure \ref{bridgemodel}. The cable is adjusted in such a way that it carries its own weight, the weight of the hangers and the weight of the
deck (beam) without producing a bending moment in the beam, so that all additional deformations of the cable and the beam due to live loads are small.
The cable is considered as a perfectly flexible string subject to vertical dead and live loads. The string is subject to a downwards vertical constant
dead load $-q$ and the horizontal component $H>0$ of the tension remains constant. If the mass of the cable is neglected, then the dead load is
distributed per horizontal unit. The resulting equation simply reads $Hg''(x)=q$ (see \cite[(1.3),VII]{von1940mathematical})
so that the cable takes the shape of a parabola with a $\cup$-shaped graph.
If the endpoints of the string (top of the towers) are at the same level $\gamma>0$ (as in suspension bridges, see again Figure \ref{bridgemodel}),
then the solution $g$ and the length $L_c$ of the cable are given by:
\begin{equation}\label{trueparabola}
g(x)\!=\!\gamma\!+\!\frac{q}{2H}x(x-L)\, ,\quad g'(x)=\frac{q}{H}\left(x-\frac{L}{2}\right)\, ,\quad g''(x)=\frac{q}{H}\, ,\quad\forall x\in(0,L),
\end{equation}
\begin{equation} \label{trueparabola2}
L_c\!=\!\int\limits_0^{L}\!\sqrt{1\!+\!g'(x)^2}\, dx.
\end{equation}

The elastic deformation of the hangers is usually neglected, so that the function $w$ describes both the displacements of the beam and of the cable from its
equilibrium position $g$. This classical assumption is justified by precise studies on linearized models, see e.g.\ \cite{luco}.
When the live load $p$ is added, a certain amount $p_1$ of $p$ is carried by the cable whereas the remaining part $p-p_1$ is carried by the bending
stiffness of the beam. In this case, it is well-known \cite{melan1906theory,von1940mathematical} that the equation for the displacement $w$
of the beam is
\begin{equation}\label{terza}
EI\, w''''(x)=p(x)-p_1(x)\qquad\forall x\in(0,L)\, .
\end{equation}
At the same time, the horizontal tension of the cable is increased to $H+h(w)$ and the deflection $w$ is added to the displacement $g$. Hence,
according to \eqref{trueparabola}, the equation which takes into account these conditions reads
\begin{equation}\label{seconda}
\big(H+h(w)\big)\big(g''(x)+w''(x)\big)=q-p_1(x)\qquad\forall x\in(0,L)\, .
\end{equation}
Then, by combining \eqref{trueparabola}-\eqref{terza}-\eqref{seconda}, we obtain
\begin{equation}\label{melaneq}
EI\, w''''(x)-\big(H+h(w)\big)\, w''(x)-\frac{q}{H}\, h(w)=p(x)\qquad\forall x\in(0,L)\, ,
\end{equation}
which is known in literature as the {\bf Melan equation} \cite[p.77]{melan1906theory}. The beam representing the bridge is hinged at its
endpoints, which means that the boundary conditions to be associated to \eqref{melaneq} are
\begin{equation}\label{hinged}
w(0)=w(L)=w''(0)=w''(L)=0\, .
\end{equation}

Theoretical results on the Melan equation \eqref{melaneq} are quite demanding \cite{gazzola2014melan,gazzola2016variational} and this is the reason why
it has attracted the attention of numerical analysts \cite{se1,se,se2,woll}.
In this paper we analyze and quantify the two main nonlinear (and challenging) behaviors of \eqref{melaneq}.
The first one is the additional tension of the cable, $h(w)$ which is a nonlocal term and is proportional to the length increment of the cable.
Depending on the deflection of the beam, the cable may vary its shape and tension, and such phenomenon is studied in Section \ref{thresholdcable}
where we compute the exact thresholds of shortening, depending on the deflection $w$. In Theorem \ref{theoshortening} we show that there is a
striking difference between the even and odd vibrating modes of the beam. The second source of nonlinearity is the possible slackening
of the hangers which, however, is not considered in \eqref{melaneq} due to the assumption of inextensibility of the hangers.
Indeed, $w$ in \eqref{melaneq} aims to represent both the deflections of the beam and of the cable, implying that the cable reaches the new
position $g+w$. But since the hangers do not resist to compression, they may slacken so that
the cable and the beam move independently and $w$ will no longer represent the displacement of the cable from its original position. This phenomenon
is analyzed in detail in Section \ref{hangers} where we suggest an improved version of \eqref{melaneq} which also takes into account the slackening of
the hangers, see \eqref{improved}. In Section \ref{sectplate} we extend this study to a partially hinged rectangular plate aiming to model the deck of a
bridge and thereby having two opposite edges completely free: we view these free edges as beams sustained by cables and
governed by the Melan equation. The results are complemented with some enlightening figures.

\section{Thresholds for cable shortening in a beam model} \label{thresholdcable}

A given displacement of the deck $w\in C^1([0, L], \mathbb{R})$ generates an additional tension $h(w)$ in the cable that is proportional to the increment of
length of the cable $\Gamma(w)$, that is,
\begin{equation} \label{gamma}
h(w) = \frac{E_cA}{L_{c}}\Gamma(w)\ \mbox{ where }\ \Gamma(w) = \int\limits_{0}^{L}\Big[\sqrt{1 + \big(w'(x)+g'(x)\big)^2}-\sqrt{1 + g'(x)^2}\Big] \ dx\, .
\end{equation}	

\begin{definition}
We say that a displacement $w$ \textbf{shortens} the cable if $\Gamma(w)<0$.
\end{definition}

There are at least three rude ways to approximate $h(w)$, by replacing $\Gamma(w)$ with
$$
-\tfrac{q}{H}\!\int_0^L\!\!w(x)dx,\quad-\tfrac{q}{H}\!\int_0^L\!\!w(x)dx+\int_0^L\!\tfrac{w'(x)^2}{2}dx,\quad
-\tfrac{q}{H}\!\!\bigintsss_0^L\!\tfrac{w(x)}{\left[1+\tfrac{q^2}{H^2}\left(x-\tfrac{L}{2}\right)^2\right]^{3/2}}dx.
$$
These approximations are obtained through an erroneous argument. While introducing \eqref{melaneq}, Biot-von K\'arm\'an \cite{von1940mathematical}
warn the reader by writing {\em whereas the deflection of the beam may be considered small, the deflection of the string, i.e., the deviation of its shape
from a straight line, has to be considered as of finite magnitude}. However, they later decide to {\em neglect $g'(x)^2$ in comparison with unity}.
A similar mistake with a different result is repeated by Timoshenko \cite{timo1,timo2}. These approximations may lead to an average error of about $5\%$ for $h(w)$. Around $1950$ the civil and structural German engineer Franz Dischinger emphasized the dramatic consequences of bad approximations on the structures and $5\%$ turns out to be a too large error.
Moreover, since related numerical procedures are very unstable, see \cite{gazzola2014melan,se1,se,se2}, also from a mathematical point of view one
should analyze the term $h(w)$ with extreme care.\par
Since the displacement of the deck $w$, created by a live load $p$, is the solution of the Melan equation \eqref{melaneq}, we study here
which loads yield a shortening of the cable. In particular, we analyze the fundamental modes of vibration of the beam so that we consider
the following class of live loads:
\begin{equation} \label{live_load1}
p_n(x) =\rho\left(\frac{n \pi}{L} \right)^{2} \left\{ \left(\frac{n \pi}{L} \right)^{2} EI + H + h\left(\rho\sin \left(\tfrac{n \pi x}{L} \right)\right)\right\}
\sin \left(\frac{n \pi x}{L} \right) - \frac{q}{H}h \left(\rho\sin \left(\tfrac{n \pi x}{L} \right)\right)\quad\forall n\in \mathbb{N},
\end{equation}
for varying values of $\rho \in \mathbb{R}$. The load $p_n$ consists of a negative constant part $-\frac{q}{H}h\big(\rho\sin(\frac{n \pi x}{L})\big)$ and a part
that is proportional to the fundamental vibrating modes of the beam $\sin \left(\frac{n \pi x}{L} \right)$, which are the eigenfunctions of the
following eigenvalue problem:
\begin{equation}\label{eigenbeam}
v''''(x)=\lambda v(x)\quad(0<x<L)\, ,\qquad v(0)=v(L)=v''(0)=v''(L)=0\, .
\end{equation}
The reason of this choice for $p_n$ is that, after some computations, one sees that the resulting displacement $w_n$ (solution of \eqref{melaneq}) is proportional
to a vibrating mode:
\begin{equation} \label{melan}
w_{n}(x) = \rho\sin \left(\frac{n \pi x}{L} \right)\quad\forall x \in [0,L].
\end{equation}
Whence, $|\rho|$ measures the amplitude of oscillation of the vibrating mode $w_{n}$. For every $n \in \mathbb{N}$, we put $\Gamma_{n}(\rho):=\Gamma(w_n)$
and from \eqref{gamma} we infer that

\begin{equation} \label{gamma2}
\Gamma_{n}(\rho) = \int\limits_{0}^{L} \sqrt{1 + \left[\dfrac{q}{H}\left(x - \dfrac{L}{2} \right)+\dfrac{n\pi}{L}\rho\cos
\left(\frac{n \pi x}{L} \right)  \right] ^2} \ dx - L_{c}\quad\forall\rho\in\mathbb{R}.
\end{equation}

In the next result we emphasize a striking difference between even and odd modes.

\begin{theorem}\label{theoshortening}
Assume that $\tfrac{q}{H}<\tfrac{2}{5}$.\par\noindent
$\bullet$ If $n \geq 1$ is even, then $\Gamma_{n}(\rho)\ge0$ for all $\rho$; therefore, an even vibrating mode cannot shorten the cable.\par\noindent
$\bullet$ If $n \geq 1$ is odd, then there exists a (unique) critical value $\rho_{n}^{*}>0$ such that $\Gamma_{n}(\rho_{n}^{*})=0$ and $\Gamma_{n}(\rho)<0$
for all $\rho\in(0,\rho_{n}^{*})$; therefore, odd vibrating modes shorten the cable when their amplitude of oscillation $\rho$ is within this interval.
\end{theorem}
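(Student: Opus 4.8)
The plan is to study $\Gamma_n$ as a function of the single real variable $\rho$, writing $\Gamma_n(\rho)=I_n(\rho)-L_c$ with
\[
I_n(\rho)=\int_0^L\phi\big(g'(x)+\rho\,\psi_n(x)\big)\,dx,\qquad \phi(u):=\sqrt{1+u^2},\quad \psi_n(x):=\frac{n\pi}{L}\cos\frac{n\pi x}{L},
\]
so that $\Gamma_n(0)=0$. First I would record three structural facts. Since $\phi$ is strictly convex and $\rho\mapsto g'+\rho\psi_n$ is affine, $\Gamma_n$ is strictly convex in $\rho$; moreover $\phi(u)\ge|u|$ forces $\Gamma_n(\rho)\to+\infty$ as $|\rho|\to\infty$, so $\Gamma_n$ is coercive. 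Next, $\phi$ is even. Finally, under $x\mapsto L-x$ one has $g'(L-x)=-g'(x)$ and $\psi_n(L-x)=(-1)^n\psi_n(x)$, so reflection about $L/2$ flips $g'$ and fixes (resp. flips) $\psi_n$ when $n$ is even (resp. odd).

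For the even case I would exploit these symmetries. The change of variables $x\mapsto L-x$ together with $\psi_n(L-x)=\psi_n(x)$ gives $I_n(\rho)=\int_0^L\phi(\rho\psi_n-g')\,dx$, hence
\[
I_n(\rho)=\frac12\int_0^L\Big[\phi\big(g'+\rho\psi_n\big)+\phi\big(\rho\psi_n-g'\big)\Big]\,dx\ \ge\ \int_0^L\phi(g')\,dx=L_c,
\]
the inequality being midpoint convexity of $\phi$ (using $\phi(\rho\psi_n-g')=\phi(g'-\rho\psi_n)$, whose average with $\phi(g'+\rho\psi_n)$ dominates $\phi$ at the midpoint $g'$). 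Thus $\Gamma_n(\rho)\ge0$ for every $\rho$, and notably this argument needs no smallness restriction.

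For the odd case, strict convexity and coercivity reduce the entire statement to the single inequality $\Gamma_n'(0)<0$: a strictly convex coercive function with $\Gamma_n(0)=0$ and $\Gamma_n'(0)<0$ stays positive for $\rho<0$, is negative on an interval $(0,\rho_n^*)$, and vanishes again exactly once, at $\rho_n^*>0$. To compute $\Gamma_n'(0)=\int_0^L \frac{g'\psi_n}{\sqrt{1+g'^2}}\,dx$ I would integrate by parts, writing $\psi_n=\tfrac{d}{dx}\sin\tfrac{n\pi x}{L}$; the boundary terms vanish because $\sin\tfrac{n\pi x}{L}$ vanishes at $0$ and $L$, and since $\tfrac{d}{dx}\tfrac{g'}{\sqrt{1+g'^2}}=\tfrac{g''}{(1+g'^2)^{3/2}}=\tfrac{q/H}{(1+g'^2)^{3/2}}$ one obtains
\[
\Gamma_n'(0)=-\frac qH\int_0^L \frac{\sin\frac{n\pi x}{L}}{\big(1+\frac{q^2}{H^2}(x-\frac L2)^2\big)^{3/2}}\,dx,
\]
so it remains to show this integral is strictly positive for odd $n$.

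This last positivity is where the assumption $\frac qH<\frac25$ enters, and it is the main obstacle. Setting $a=q/H$, $\omega(y)=(1+a^2y^2)^{-3/2}$ and $n=2m+1$, symmetry about $L/2$ turns the claim into $(-1)^m\int_0^{L/2}\omega(y)\cos\frac{n\pi y}{L}\,dy>0$, with $\omega$ positive and decreasing. A single integration by parts isolates a favorable boundary term proportional to $\omega(L/2)$ and an oscillatory remainder, and the crude bound $\big|\,(-1)^m\!\int_0^{L/2}\omega'\sin\frac{n\pi y}{L}\,dy\,\big|\le\omega(0)-\omega(L/2)$ already yields the sign provided $2\omega(L/2)>\omega(0)$, i.e. a smallness condition on the cable slope $aL/2$. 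The genuine difficulty is that for an arbitrary positive increasing weight such an integral need \emph{not} be positive — a rapidly varying weight concentrates mass where $\sin$ is negative — so monotonicity alone is insufficient and a quantitative estimate is indispensable; the hypothesis $\frac qH<\frac25$ is precisely the threshold making the boundary contribution dominate, possibly after a sharper estimate that also uses the sign changes of $\sin\frac{n\pi y}{L}$ rather than bounding it by $1$. Once $\Gamma_n'(0)<0$ is secured, the existence, uniqueness and location of $\rho_n^*$ follow immediately from the convexity and coercivity recorded above.
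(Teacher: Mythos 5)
Your overall skeleton is the same as the paper's: both proofs rest on the strict convexity of $\rho\mapsto\Gamma_n(\rho)$, the normalization $\Gamma_n(0)=0$, and the sign of $\Gamma_n'(0)$ ($=0$ for even $n$, $<0$ for odd $n$), with coercivity supplying the second zero $\rho_n^*$. Your even case is a pleasant variant: instead of observing that the integrand of $\Gamma_n'(0)$ is skew-symmetric about $x=L/2$ and invoking strict convexity, you symmetrize $I_n(\rho)$ directly and use midpoint convexity of $u\mapsto\sqrt{1+u^2}$; both arguments are correct and neither needs the hypothesis $q/H<2/5$. The real divergence is in the odd case. The paper keeps the integral $\int_0^L g'\psi_n(1+g'^2)^{-1/2}\,dx$, reduces it to the quantity $G_n$ of Lemma \ref{signo_int}, and determines its sign by a binomial expansion of $(1+\mu^2t^2)^{-1/2}$, a recursion for $I_{n,k}=\int_0^{\pi/2}t^{2k+1}\sin(nt)\,dt$, a Leibniz-criterion case analysis for the alternating sums $J_{n,k}$, and a final geometric-series bound -- this is the bulk of Section \ref{proof1}. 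Your single integration by parts, which converts $\Gamma_n'(0)$ into $-\tfrac{q}{H}\int_0^L\sin(\tfrac{n\pi x}{L})\,(1+g'^2)^{-3/2}dx$, sidesteps all of that and is a genuinely simpler route.

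However, as written you have not finished: you reduce the required positivity to $2\omega(L/2)>\omega(0)$, i.e.\ $\bigl(1+(qL/2H)^2\bigr)^{3/2}<2$, and then merely assert that $q/H<2/5$ is ``precisely the threshold,'' hedging that a sharper estimate may be needed. That assertion is not checked, and it cannot be literally true, because your sufficient condition constrains the dimensionless maximum slope $qL/(2H)=|g'(0)|$, whereas the hypothesis constrains $q/H$ alone; for large $L$ the weight $(1+g'^2)^{-3/2}$ concentrates near $x=L/2$, where $\sin(\tfrac{n\pi x}{L})=-1$ for $n\equiv3\ (\mathrm{mod}\ 4)$, and the integral genuinely changes sign. (The paper has the same scaling issue, hidden in the unproved claim that ``it suffices to analyze the case $L=\pi$.'') The good news is that under that same normalization $L=\pi$ your crude bound closes immediately: $q/H<2/5$ gives $\bigl(1+(0.2\pi)^2\bigr)^{3/2}\approx1.65<2$, so no sharper estimate is required, and in fact your condition $|g'(0)|<\sqrt{2^{2/3}-1}\approx0.766$ is weaker than the $|g'(0)|<0.65$ that the paper's own estimate \eqref{lemaint9} demands. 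So the gap is a missing one-line numerical verification (plus an honest statement of which dimensionless quantity the hypothesis controls), not a flaw in the method; once you add it, your proof of the key sign is both complete and shorter than Lemma \ref{signo_int}.
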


Theorem \ref{theoshortening} is proved in Section \ref{proof1}. The assumption $q/H<2/5$ in Theorem \ref{theoshortening} is verified in the vast
majority of real suspension bridges. For instance, for the numerical data employed in \cite{woll}, it happens that
$q/H=1.739 \times 10^{-3} \, [m^{-1}]$. Moreover, as reported in \cite[Section 15.17]{podolny}, the sag-span ratio in a suspension bridge always lies
in the range $(\tfrac{1}{12},\tfrac{1}{8})$. In view of \eqref{trueparabola}, this means that
$$
\dfrac{L}{12} <  g(0) - g \left( \dfrac{L}{2} \right) < \dfrac{L}{8}\quad\mbox{or, equivalently,}\quad \dfrac{2}{3L} < \dfrac{q}{H} < \dfrac{1}{L}.
$$
Therefore, the assumption $\tfrac{q}{H}<\tfrac{2}{5}$ is valid for any suspension bridges with a span of at least $2.5\, [m]$! In any case, numerical
results seem to show that the assumption $\tfrac{q}{H}<\tfrac{2}{5}$ is not necessary for the validity of Theorem \ref{theoshortening}.\par
Related to $\rho_n^*$, as characterized by Theorem \ref{theoshortening}, we introduce the quantity
\begin{equation} \label{xi}
\xi_{n}^*=\rho_{n}^*\left(\frac{n \pi}{L} \right)^{2} \left\{ \left(\frac{n \pi}{L} \right)^{2} EI + H + \frac{E_cA}{L_{c}} \Gamma\big(\rho_n^*
\sin \left(\tfrac{n \pi x}{L}\right)\big)\right\}\quad\forall n \in \mathbb{N},
\end{equation}
which is the amplitude of oscillation of the live load $p_n$ in \eqref{live_load1} that generates the critical oscillation $w_{n}^*(x)=\rho_n^*
\sin(\tfrac{n \pi x}{L})$. Throughout this paper, as far as numerical data are needed, we use the parameters taken from \cite{woll}:
\begin{equation}\label{numerical}
L = 460\, [m],\quad EI = 57 \times 10^{6}\, [kN \cdot m],\quad E_cA=36\times 10^{6} \, [kN],\quad\frac{q}{H} = 1.739 \times 10^{-3} \, [m^{-1}].
\end{equation}
Table \ref{table:1} shows the critical values of $\rho_{n}^{*}$ and $\xi_{n}^{*}$ (according to Theorem \ref{theoshortening}
and \eqref{xi}), as functions of some odd values of $n \in \N$.
\begin{table}[H]
\small
   \centering
   \begin{tabular}{ | c | c | c | c | c | c | c | c | c | c | c | c | c |}
       \hline
       $n$ & 1 & 3 & 5 & 7 & 9 & 11 & 13 & 15 & 17 & 19 \\ \hline
       $\rho_{n}^{*}$ & 94.807 & 3.056 & 0.657 & 0.239  & 0.112 & 0.061 & 0.037 & 0.024 & 0.016 & 0.011 \\ \hline
       $\xi_{n}^{*}$ & 444.016 & 156.115 & 125.811 & 124.578 & 132.962 & 145.676 & 160.734 & 177.192 & 194.559 & 212.620 \\ \hline
   \end{tabular}
   \captionsetup{justification=centering}
   \captionsetup{font={footnotesize}}
   \captionof{table}{Critical coefficients for cable shortening in odd-vibrating modes.}
   \label{table:1}
\end{table}

As stated in Theorem \ref{theoshortening}, even modes never shorten the cable. This {\em does not} mean that odd modes are ``worse'' or more prone to elongate
the cable. On the contrary, thinking of a periodic-in-time oscillation proportional to a vibrating mode \eqref{melan}, that is,
$$
\rho(t)\sin \left(\frac{n \pi x}{L} \right)\quad\forall x \in [0,L],\quad\forall t>0\, ,
$$
with $\rho(t)$ varying between $\pm\overline{\rho}$, we reach the opposite conclusion.
To see this, in Figure \ref{subfig:xi1} we plot the graphs of $\Gamma_{2}$ and $\Gamma_{3}$ and we see that
$$\max\{\Gamma_3(\overline{\rho}),\Gamma_3(-\overline{\rho})\}>\max\{\Gamma_2(\overline{\rho}),\Gamma_2(-\overline{\rho})\}=\Gamma_2(\overline{\rho})\, .$$
Therefore, even if the cable shortens when $\rho(t)\in(0,\rho_3^*)$ for the third mode, the cable itself elongates more than for the second mode
when $\rho(t)<0$. We come back to this issue in Section \ref{sectplate}.

\begin{figure}[H]
\centering
\begin{subfigure}{.44\textwidth}
  \includegraphics[scale=0.5]{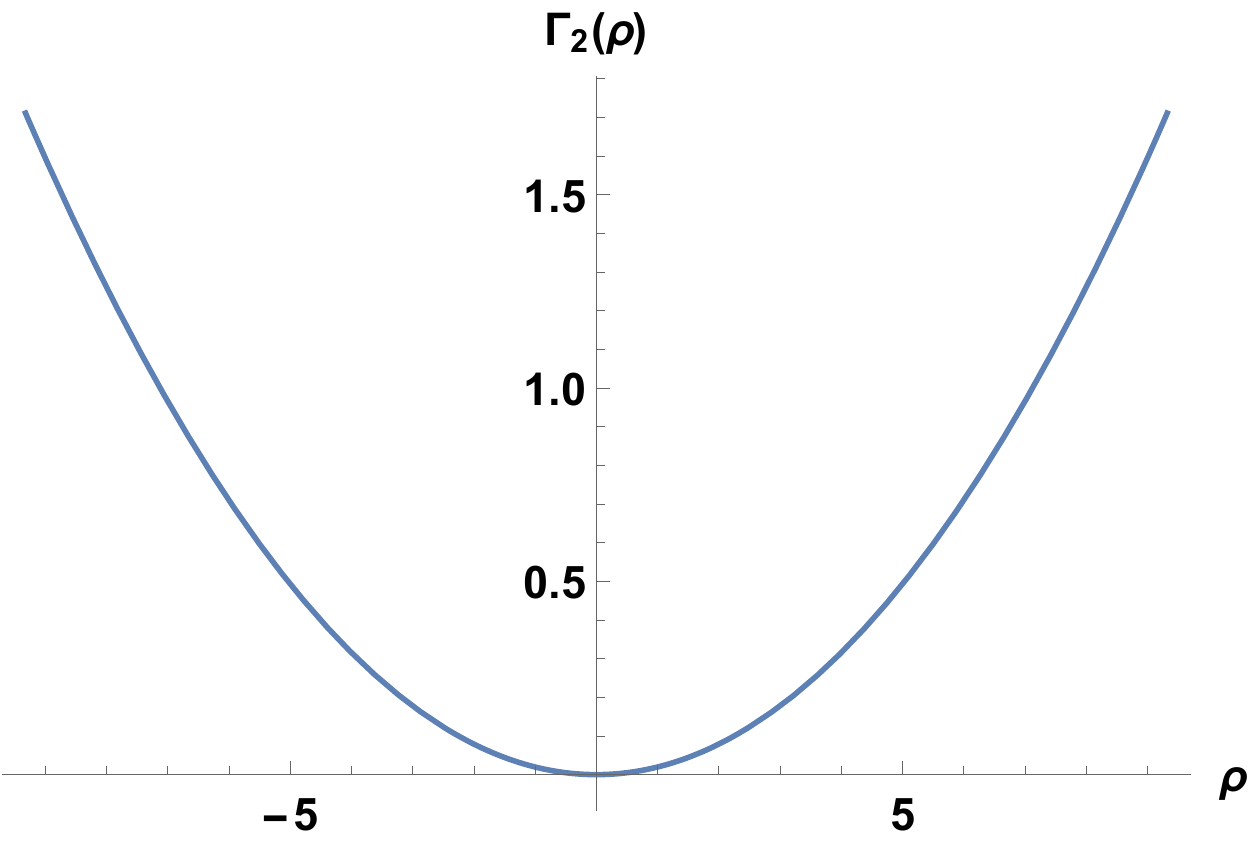}
\end{subfigure}
\quad
\begin{subfigure}{.44\textwidth}
   \includegraphics[scale=0.5] {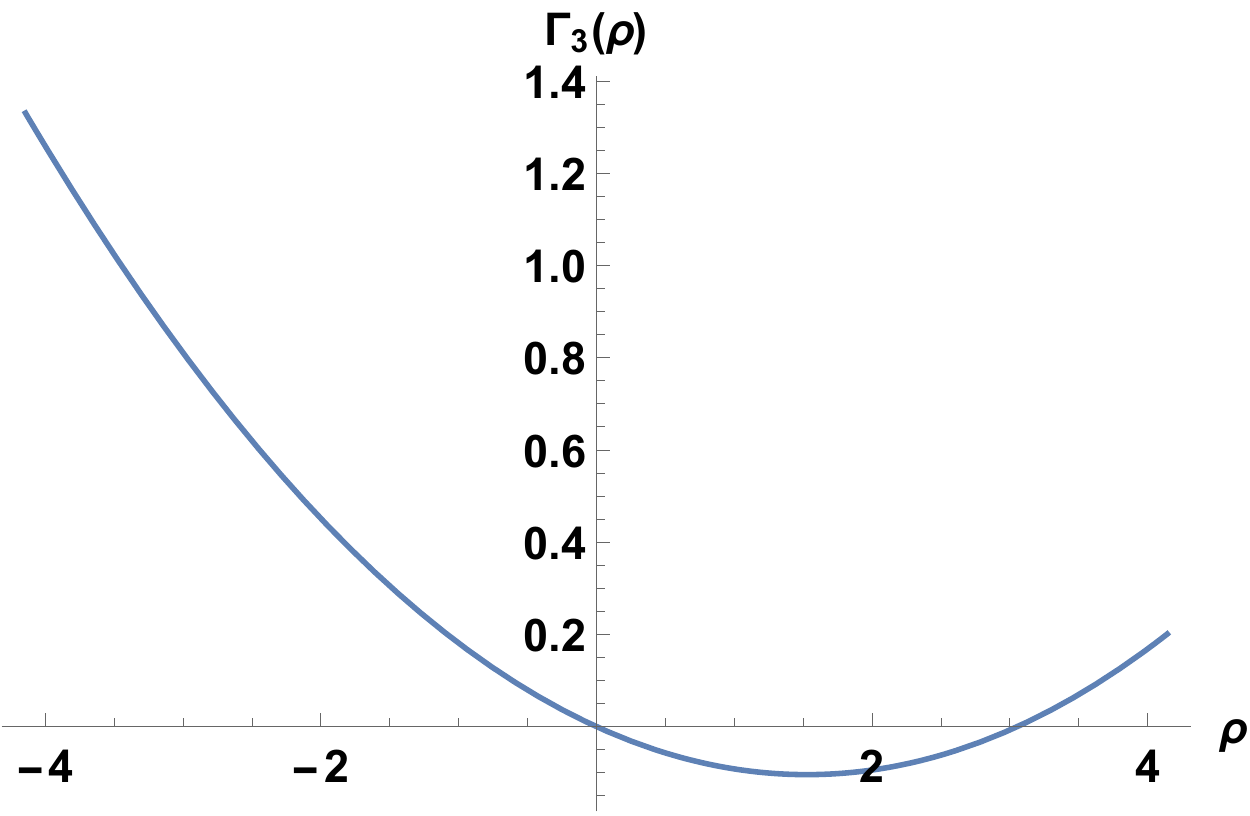}
\end{subfigure}
\caption{Increment of cable length in the second (left) and third (right) vibrating modes.}\label{subfig:xi1}
\end{figure}

\section{Thresholds for hangers slackening in a beam model}\label{hangers}

In this section we estimate the thresholds that provoke the slackening of some hangers. Since the hangers resist to extension but not to compression,
if the deck goes too high above its equilibrium position, then {\em the hangers may no longer be considered as rigid inextensible bars}. In particular,
they will not push upwards the cable in such a way that it loses convexity: the general principles governing the deformation of a finite-length
cable under the action of a downwards vertical load (see \cite[(1.3), VII]{von1940mathematical}) indicate that the cable remains convex.
This means that if $g+w$ is not convex, then it {\em does not} describe the position of the cable anymore.\par
In order to explain how the Melan equation \eqref{melaneq} should be modified in case of hanger slackening we briefly recall the concept of
\textit{convexification} which can be formalized in several equivalent ways, see \cite[(3.2), I]{ekeland1976convex} for full details.\par
Let $I\subset\R$ be a compact interval. The convexification $f^{**}$ of a continuous function $f:I\to\R$ is:\par
$\bullet$ the pointwise supremum of all the affine functions everywhere less than $f$;\par
$\bullet$ the pointwise supremum of all the convex functions everywhere less than $f$;\par
$\bullet$ the largest convex function everywhere less than or equal to $f$;\par
$\bullet$ the convex function whose epigraph is the closed convex hull of the epigraph of $f$;\par
$\bullet$ the second Fenchel conjugate of $f$, that is,
$$f^{**}(x)=\sup_{y\in\R} \{ xy - f^{*}(y) \}\ \ \forall x\in I\, ,\quad\mbox{ where }\quad f^{*}(y)=\max_{x\in I}\{yx-f(x)\}\ \ \forall y\in\R\, .$$

This notion enables us to give the following:

\begin{definition} \label{slacken_region0}
We say that a displacement $w$ \textbf{slackens} the hangers in some (nonempty) interval $(a,b)\subset[0,L]$ if the graph of
\begin{equation}\label{z}
z:=g+w
\end{equation}
lies strictly above that of its convexification $z^{**}$ in $(a,b)$. Then, the \textbf{slackening region} $\mathcal{S}\subset[0,L]$ is the union
of all the slackening intervals, that is,
$$
\mathcal{S}=\{x \in (0,L) \ | \ z(x) > z^{**}(x) \}\, .
$$
\end{definition}

In the slackening region, not only the Melan equation \eqref{melaneq} is incorrect but also \eqref{terza} fails since {\em the whole amount of live
load is carried by the beam}: one has $p_1(x)=0$ for all $x\in\mathcal{S}$. Therefore \eqref{melaneq} should be replaced with the more reliable equation
\begin{equation}\label{improved}
EI\, w''''(x)+\Big(\chi_\mathcal{S}(w) - 1\Big)\bigg(\big(H+h(w)\big)\, w''(x)+\frac{q}{H}\, h(w)\bigg)=p(x)\qquad\forall x\in(0,L)
\end{equation}
where $\chi_\mathcal{S}(w)$ is the characteristic function (that depends on $w$) of the slackening region $\mathcal{S}$, see Definition
\ref{slacken_region0}. We summarize these results in the following statement.

\begin{proposition}\label{summary}
In absence of slackening ($\mathcal{S}=\emptyset$) the two equations \eqref{melaneq} and \eqref{improved} coincide; in this case, the solution
$w$ represents the displacement of the beam whereas $z$ in \eqref{z} represents the position of the cable.\par
In presence of slackening ($\mathcal{S}\neq\emptyset$) the correct equation is \eqref{improved} and the position of the cable is described by $z^{**}$.
\end{proposition}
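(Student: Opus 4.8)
The plan is to treat the two regimes of Proposition \ref{summary} separately, since the statement consolidates the heuristic derivation that precedes it: the regime $\mathcal{S}=\emptyset$ is almost entirely algebraic, whereas the regime $\mathcal{S}\neq\emptyset$ requires identifying the true cable profile by means of the convexification recalled above.

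For $\mathcal{S}=\emptyset$ I would first note that the slackening region being empty forces $\chi_{\mathcal{S}}(w)\equiv0$ on $(0,L)$; inserting this into \eqref{improved} reduces the factor $\chi_{\mathcal{S}}(w)-1$ to $-1$ and reproduces \eqref{melaneq} verbatim, so the two equations coincide. The interpretation then follows at once: $w$ is the beam deflection by the very derivation of \eqref{melaneq} recalled in the Introduction, while the identity $\mathcal{S}=\emptyset$ is equivalent to $z\equiv z^{**}$, that is, to the convexity of $z=g+w$. Since a cable hanging under a downward load must be convex, a convex $z$ is an admissible cable shape; and because every hanger is taut and inextensible, the cable is rigidly tied to the beam and hence occupies exactly the position $z=g+w$.

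For $\mathcal{S}\neq\emptyset$ I would re-examine the balance laws \eqref{terza}--\eqref{seconda} piecewise. On the complement $(0,L)\setminus\mathcal{S}$ the hangers are taut and load is transferred to the cable as in the classical case, so \eqref{terza}--\eqref{seconda} together with \eqref{trueparabola} reproduce the Melan equation, which is \eqref{improved} with $\chi_{\mathcal{S}}(w)=0$. On $\mathcal{S}$ the slack hangers transmit no force, whence $p_1\equiv0$ there (as already observed) and \eqref{terza} collapses to $EI\,w''''=p$, which is \eqref{improved} with $\chi_{\mathcal{S}}(w)=1$. Matching the two cases shows that \eqref{improved} holds throughout $(0,L)$.

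The crux, and the step I expect to be the main obstacle, is the identification of the actual cable position with $z^{**}$. Denoting by $\zeta$ the true cable profile, on $(0,L)\setminus\mathcal{S}$ the taut hangers tie the cable to the beam, so $\zeta=z$; on each connected component of $\mathcal{S}$ the cable bears no local load and is therefore a straight segment hanging below $z$, since the slack hangers cannot pull it up to the level $z$. As a cable under a downward load must remain convex and can never rise above its taut position, $\zeta$ is forced to be the largest convex function not exceeding $z$, which is exactly the convexification $z^{**}$. The delicate point is to justify that ``held up as much as the taut hangers permit'' coincides with this extremal property, namely that $z^{**}=z$ off $\mathcal{S}$ and that $z^{**}$ is affine on each slackening interval; this is a standard but nontrivial structural fact about convex envelopes, and it is precisely what makes $z^{**}$ --- rather than some other convex minorant --- the physically correct profile.
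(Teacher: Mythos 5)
Your proposal is correct and follows essentially the same route as the paper, which does not give a formal proof but presents Proposition \ref{summary} as a summary of the preceding discussion: when $\mathcal{S}=\emptyset$ the factor $\chi_{\mathcal{S}}(w)-1\equiv-1$ reduces \eqref{improved} to \eqref{melaneq}, on $\mathcal{S}$ one uses $p_1\equiv0$ so that \eqref{terza} collapses to $EI\,w''''=p$, and the cable position is identified with $z^{**}$ via the physical requirement that a cable under downward load remains convex together with the characterization of the convexification as the largest convex minorant. Your extra care in justifying why the true profile is exactly $z^{**}$ (affine on slackening intervals, equal to $z$ elsewhere) is a welcome elaboration of what the paper leaves implicit, but it is not a different argument.
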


The term $(\chi_\mathcal{S}(w)-1)$ adds a further nonlinearity to the Melan equation \eqref{melaneq}.
As far as we are aware, there is no general theory to tackle equations such as \eqref{improved}. It would therefore be interesting to study its features in detail.\par
Although the exact slackening region is difficult to determine, it is clear that the non-convexity intervals of $z$ in \eqref{z} represent
proper subsets of these regions. Therefore, we have

\begin{proposition}\label{slackening1}
Let $w$ be the solution of \eqref{improved} and let $z$ be as in \eqref{z}. If $\mathcal{S}\neq\emptyset$, then
$$\{x\in(0,L);\, z''(x)\le0\}\subsetneqq\mathcal{S}\, .$$
\end{proposition}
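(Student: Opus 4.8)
The plan is to read off the conclusion from the structure of the convexification $z^{**}$ on the open set $\mathcal{S}$, combined with the regularity of $z=g+w$. Since $g$ is the smooth parabola in \eqref{trueparabola} and $w$ solves the fourth-order equation \eqref{improved} with a bounded right-hand side, $z$ is of class $C^2$ (indeed $C^3$) on $[0,L]$. First I would record the standard facts about the convexification recalled before Definition \ref{slacken_region0} (see \cite[(3.2), I]{ekeland1976convex}): as $z$ and $z^{**}$ are continuous, $\mathcal{S}=\{z>z^{**}\}$ is open and is therefore a countable disjoint union of maximal open intervals; on each such component $(a,b)$ the ``largest convex minorant'' description forces $z^{**}$ to be affine, say $z^{**}=\ell$ on $[a,b]$, with $z(a)=\ell(a)$, $z(b)=\ell(b)$ and $z>\ell$ on $(a,b)$. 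Because $z\in C^1$, the envelope $z^{**}$ is $C^1$ as well, so the affine piece is tangent to the graph of $z$ at the endpoints interior to $(0,L)$: there $z'(a)=z'(b)=\ell'$.

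For the containment of the non-convexity region in $\mathcal{S}$ I would argue by contraposition using supporting lines. If $x_0\notin\mathcal{S}$, then $z(x_0)=z^{**}(x_0)$, and choosing a supporting line $\ell_0$ of the convex function $z^{**}$ at $x_0$ gives $z\ge z^{**}\ge\ell_0$ on all of $[0,L]$, with equality at $x_0$. Hence $x_0$ is a global minimum of the $C^2$ function $z-\ell_0$, so that $z''(x_0)\ge 0$. Equivalently, every point with $z''<0$ must lie in $\mathcal{S}$, i.e. the non-convexity intervals are trapped inside the slackening region. The only delicate point here is the borderline locus $z''=0$: at an interior point of the contact set $\{z=z^{**}\}$ the supporting-line estimate yields only $z''\ge0$, so the sharp conclusion really concerns strict non-convexity; I would phrase the inclusion accordingly and observe that it is precisely this supporting-line mechanism that places the non-convexity set inside $\mathcal{S}$.

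Finally, the strictness of the inclusion is where the endpoint tangency is used, and it is the step I expect to require the most care. Since $\mathcal{S}\neq\emptyset$, pick a component $(a,b)$ with an endpoint interior to $(0,L)$, say $a$, and set $\psi:=z-z^{**}=z-\ell$ on $[a,b]$, so that $\psi(a)=0$, $\psi'(a)=0$ (contact and tangency) and $\psi>0$ on $(a,b)$. For a fixed $x_1\in(a,b)$, a first mean value theorem on $[a,x_1]$ produces $\eta\in(a,x_1)$ with $\psi'(\eta)=\psi(x_1)/(x_1-a)>0$, and a second one on $[a,\eta]$ produces $\zeta\in(a,\eta)$ with $\psi''(\zeta)=\psi'(\eta)/(\eta-a)>0$. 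As $z^{**}$ is affine on $(a,b)$, we obtain $z''(\zeta)=\psi''(\zeta)>0$ with $\zeta\in\mathcal{S}$, exhibiting a point of $\mathcal{S}$ outside $\{z''\le 0\}$ and thereby upgrading the inclusion to a strict one. The main obstacle throughout is the convex-analysis input of the first paragraph, namely that $z^{**}$ is affine on each slackening interval and, crucially, $C^1$ and tangent to $z$ at its interior endpoints; once this structure is secured, both the trapping estimate and the mean value argument are routine.
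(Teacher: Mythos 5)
The paper does not actually prove Proposition \ref{slackening1}: it is introduced only with ``it is clear that the non-convexity intervals of $z$ \dots represent proper subsets of these regions,'' and no argument is given. Your convex-analysis proof therefore supplies something the paper omits, and its core is sound: the affine structure of $z^{**}$ on each component of the open set $\mathcal{S}$, the $C^1$ tangency of the affine piece to $z$ at contact points interior to $(0,L)$, the supporting-line argument showing $z''(x_0)\ge 0$ at any point of the contact set, and the double mean value theorem producing a point of $\mathcal{S}$ where $z''>0$. Two points deserve attention. First, as you yourself flag, the supporting-line step only yields $\{z''<0\}\subseteq\mathcal{S}$, whereas the proposition is stated with $\{z''\le 0\}$; a point where $z''$ vanishes but $z$ is locally convex and touches $z^{**}$ would belong to the left-hand set and not to $\mathcal{S}$, so the statement with $\le$ is not literally provable without further input. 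Your decision to restate the inclusion for the strict non-convexity set is the correct repair, and it is consistent with how the non-convexity conditions are actually used later (Proposition \ref{propmodes} and its plate analogues involve open inequalities). Second, your strictness step picks a component $(a,b)$ of $\mathcal{S}$ ``with an endpoint interior to $(0,L)$''; such a component need not exist, since for large displacements one can have $\mathcal{S}=(0,L)$ with $z^{**}$ equal to the chord, and then the tangency you rely on at $a$ is unavailable. This case is easily covered: the boundary conditions \eqref{hinged} give $w''(0)=w''(L)=0$, hence $z''(0)=z''(L)=q/H>0$ by \eqref{trueparabola}, so points of $\mathcal{S}$ near the endpoints already satisfy $z''>0$ and witness strictness (this also shows $\{z''\le 0\}$ stays away from the endpoints). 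With these two patches your argument is a complete proof of the intended statement.
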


We now apply Proposition \ref{summary} to the case of the loads $p_n$ in \eqref{live_load1}.

\begin{proposition}\label{propmodes}
Let $p_n$ and $w_n$ be as in \eqref{live_load1} and \eqref{melan}. Let
\begin{equation} \label{umbral0}
C_{n}^{*} := \frac{q}{H} \left(\frac{L}{n \pi} \right)^{2}\quad\forall n \in \N\, .
\end{equation}
Slackening occurs if and only if
\begin{equation}\label{iff}
\rho>C_1^*\mbox{ when }n=1\, ,\qquad|\rho|>C_n^*\mbox{ when }n\ge2\, ;
\end{equation}
in this case, the position of the cable is described by $z_n^{**}$ (with $z_n=g+w_n$).
\end{proposition}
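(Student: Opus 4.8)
The plan is to convert the slackening condition of Definition \ref{slacken_region0} into a pointwise sign condition on $z_n''$ and then read off the thresholds from the range of a sine. First I would record the elementary equivalence that makes everything work. Since $z_n^{**}\le z_n$ always, and $z_n^{**}$ is the largest convex minorant of $z_n$, the set $\mathcal{S}=\{x\in(0,L): z_n(x)>z_n^{**}(x)\}$ is nonempty if and only if $z_n$ is not convex; because $z_n=g+w_n\in C^2([0,L])$, non-convexity is equivalent to $z_n''(x_0)<0$ for some $x_0\in(0,L)$. Thus slackening occurs precisely when $z_n''$ is negative at some point, and it is this pointwise condition that I would analyze.

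The single computation needed comes from \eqref{trueparabola} and \eqref{melan}:
$$z_n''(x)=g''(x)+w_n''(x)=\frac{q}{H}-\rho\Big(\frac{n\pi}{L}\Big)^2\sin\Big(\frac{n\pi x}{L}\Big),$$
so that the no-slackening requirement $z_n''\ge0$ on $[0,L]$ reads
$$\rho\Big(\frac{n\pi}{L}\Big)^2\sin\Big(\frac{n\pi x}{L}\Big)\le\frac{q}{H}\qquad\forall x\in[0,L].$$
Everything now reduces to maximizing the left-hand side over $x$, i.e.\ to the range of $\sin(n\pi x/L)$ on $[0,L]$; this is exactly where the parity of $n$ enters and the two branches of \eqref{iff} separate.

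The case split is then immediate. For $n=1$ the argument $\pi x/L$ lies in $[0,\pi]$, so $\sin(\pi x/L)\in[0,1]$; since $q/H>0$, the inequality can fail only when $\rho>0$, and it fails exactly when $\rho(\pi/L)^2>q/H$, i.e.\ $\rho>C_1^*$. For $n\ge2$ the argument $n\pi x/L$ sweeps the interval $[0,n\pi]$ of length at least $2\pi$, hence $\sin(n\pi x/L)$ attains both $+1$ and $-1$; the maximum of $\rho\sin(n\pi x/L)$ over $x$ equals $|\rho|$, so the inequality fails exactly when $|\rho|(n\pi/L)^2>q/H$, i.e.\ $|\rho|>C_n^*$. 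This is precisely \eqref{iff}. The final claim that the cable then sits at $z_n^{**}$ is not a new computation: once $\mathcal{S}\neq\emptyset$ it is the content of Definition \ref{slacken_region0} together with Proposition \ref{summary}, the cable being compression-free and therefore adopting the convex shape $z_n^{**}=(g+w_n)^{**}$.

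I expect the only genuinely delicate point to be the opening equivalence ``$\mathcal{S}\neq\emptyset\iff z_n$ non-convex'', which must be made watertight in both directions: a pointwise sign condition $z_n''(x_0)<0$ has to be upgraded, via continuity of $z_n''$ and strict concavity on a neighborhood of $x_0$, to a genuine subinterval on which $z_n>z_n^{**}$, while $z_n''\ge0$ everywhere must be seen to force $z_n^{**}=z_n$. Both are standard consequences of the characterizations of the convexification listed before Definition \ref{slacken_region0}, but they are the step deserving care; by contrast the trigonometric optimization that produces the thresholds $C_n^*$ is entirely routine.
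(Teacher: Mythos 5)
Your proof is correct and follows essentially the same route as the paper: reduce $\mathcal{S}\neq\emptyset$ to the existence of a point where $z_n''<0$, compute $z_n''(x)=\tfrac{q}{H}-\rho(\tfrac{n\pi}{L})^2\sin(\tfrac{n\pi x}{L})$, and obtain the thresholds from the range of the sine, which is $[0,1]$ for $n=1$ and $[-1,1]$ for $n\ge2$. The only difference is that you spell out the equivalence between $\mathcal{S}\neq\emptyset$ and non-convexity of the $C^2$ function $z_n$, which the paper simply asserts.
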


The proof of Proposition \ref{propmodes} is fairly simple. The slackening region of $w_n$ is nonempty if and only if there exists $x\in(0,L)$
such that $z_n''(x)<0$, where
$$z_{n}(x) =g(x)+w_n(x)= \gamma + \frac{q}{2H}x(x-L) + \rho\sin \left(\frac{n \pi x}{L} \right)\quad\forall x \in [0,L].$$
This property translates into
$$\exists x\in(0,L)\quad\mbox{such that}\quad\rho\sin \left(\frac{n \pi x}{L} \right)>C_n^*\, ,$$
which is equivalent to \eqref{iff}.
\begin{figure}[H]
	\centering
	\begin{subfigure}{.4\textwidth}
		\includegraphics[scale=0.5]{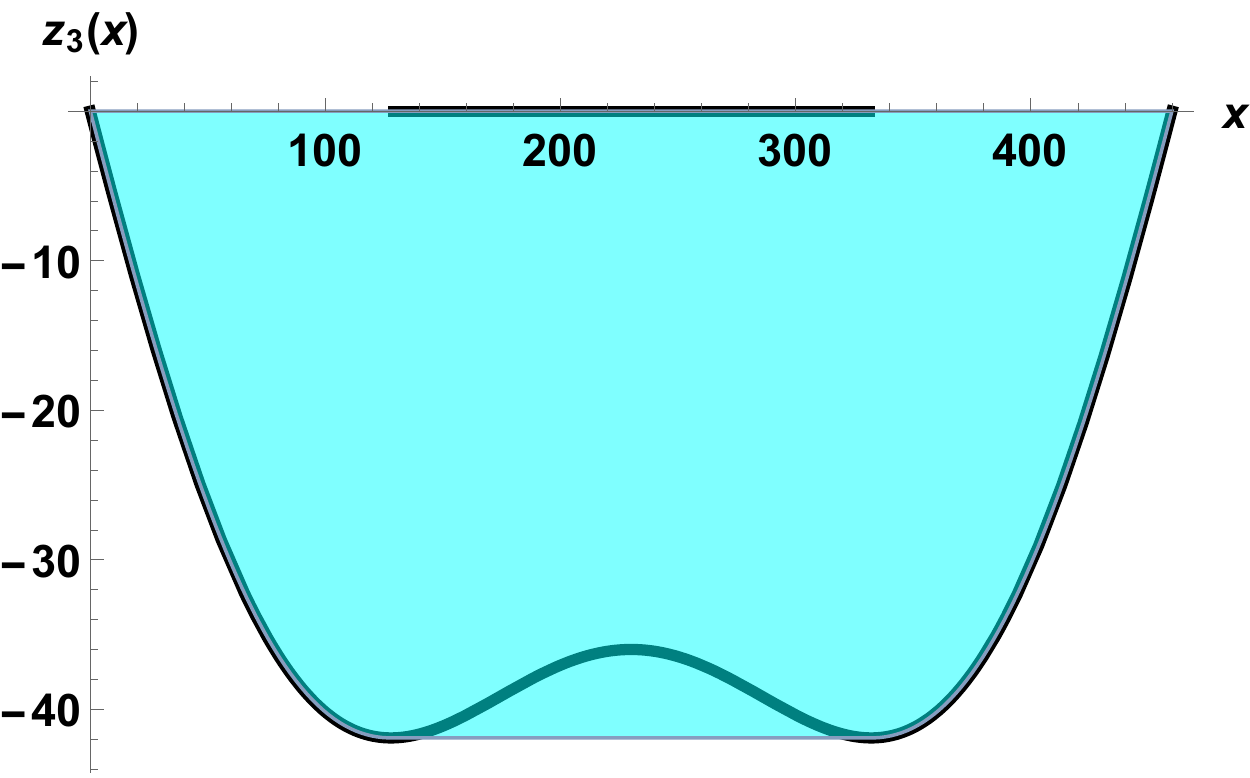}
	\end{subfigure}
	\quad
	\begin{subfigure}{.4\textwidth}
		\includegraphics[scale=0.5]{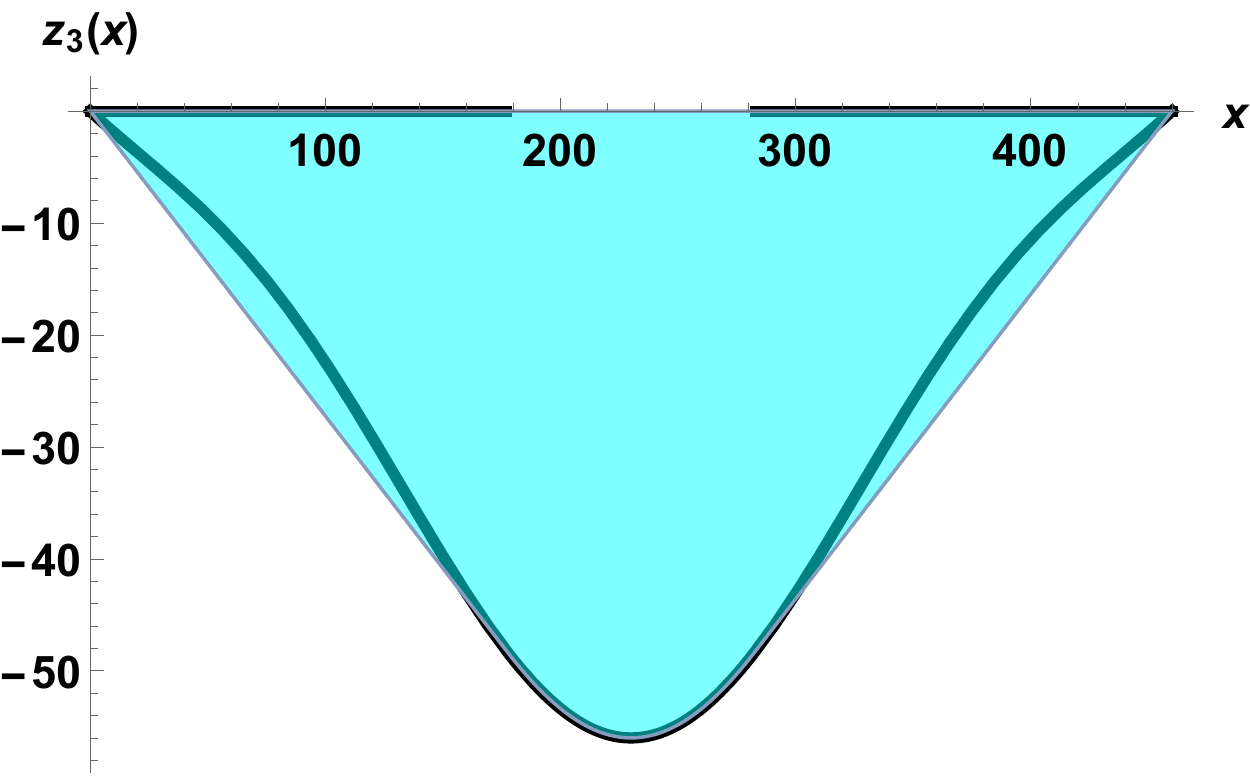}
	\end{subfigure}
	\caption{Slackening of the third vibrating mode when $\rho_{3} < -C_{3}^{*}$ (left) and when $\rho_{3} > C_{3}^{*}$ (right).}\label{fig:revol0}
\end{figure}
Since it is by far nontrivial to determine explicitly the convexification of $z_n$ and the slackening region $\mathcal{S}_{n}$, we follow a
numerical-geometrical approach, that is, we plot the closed convex hull of the epigraph of $z_n$. We take again the numerical values \eqref{numerical}.
In order to illustrate the procedure, consider the function $z_{3}$ (with $\gamma = 0$, since we are only interested in the shape of the curve), whose
slackening threshold is $C_{3}^{*} \approx 4.1426$. By putting amplitudes of $\rho_{3} = \pm 10$, we obtained the graphs of $z_{3}$ in Figure
\ref{fig:revol0} where the slackening intervals have been highlighted over the horizontal axis, and the closed convex hull of the epigraph of $z_{3}$
has been shaded.
\noindent
Similarly, by putting amplitudes of $\rho_{5} = \pm 5$, we obtained the plots displayed in Figure \ref{fig:revol1} for the graphs of $z_{5}$
(for which $C_{5}^{*} \approx 1$.$4913$):

\begin{figure}[H]
	\centering
	\begin{subfigure}{.4\textwidth}
		\includegraphics[scale=0.5]{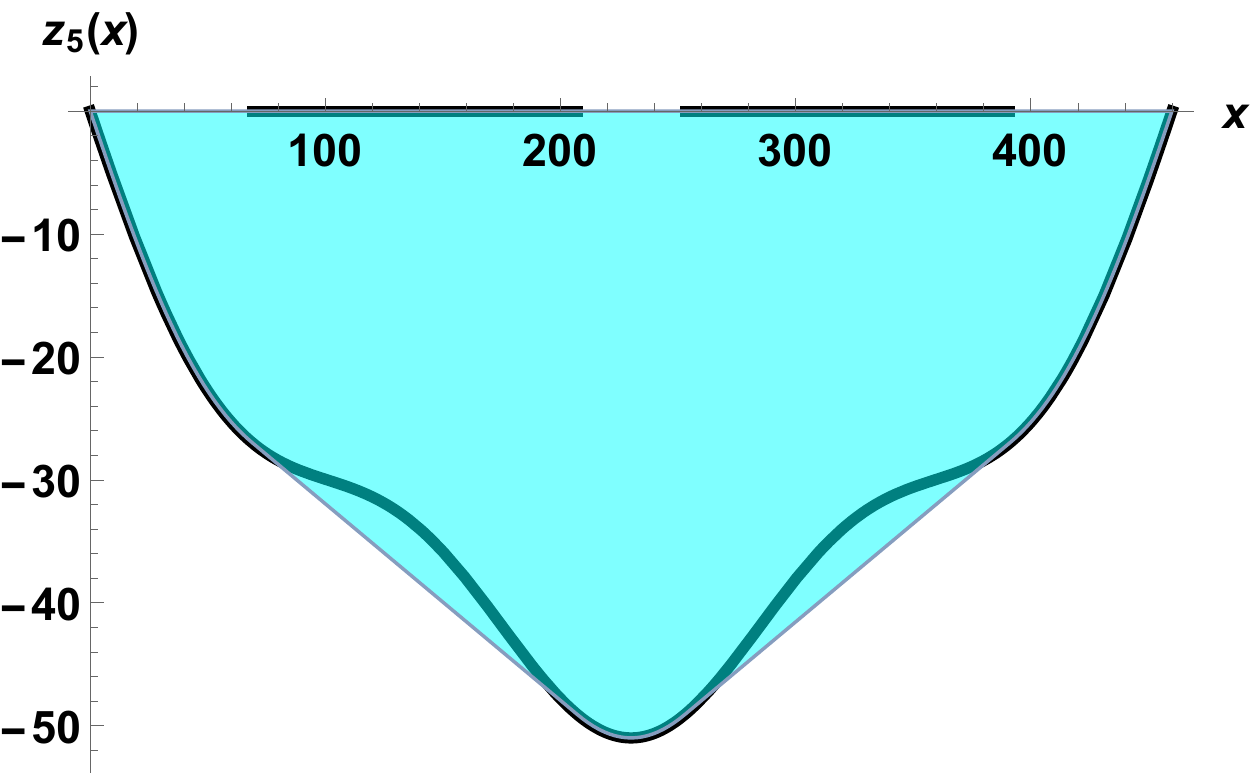}
	\end{subfigure}
	\quad
	\begin{subfigure}{.4\textwidth}
		\includegraphics[scale=0.5]{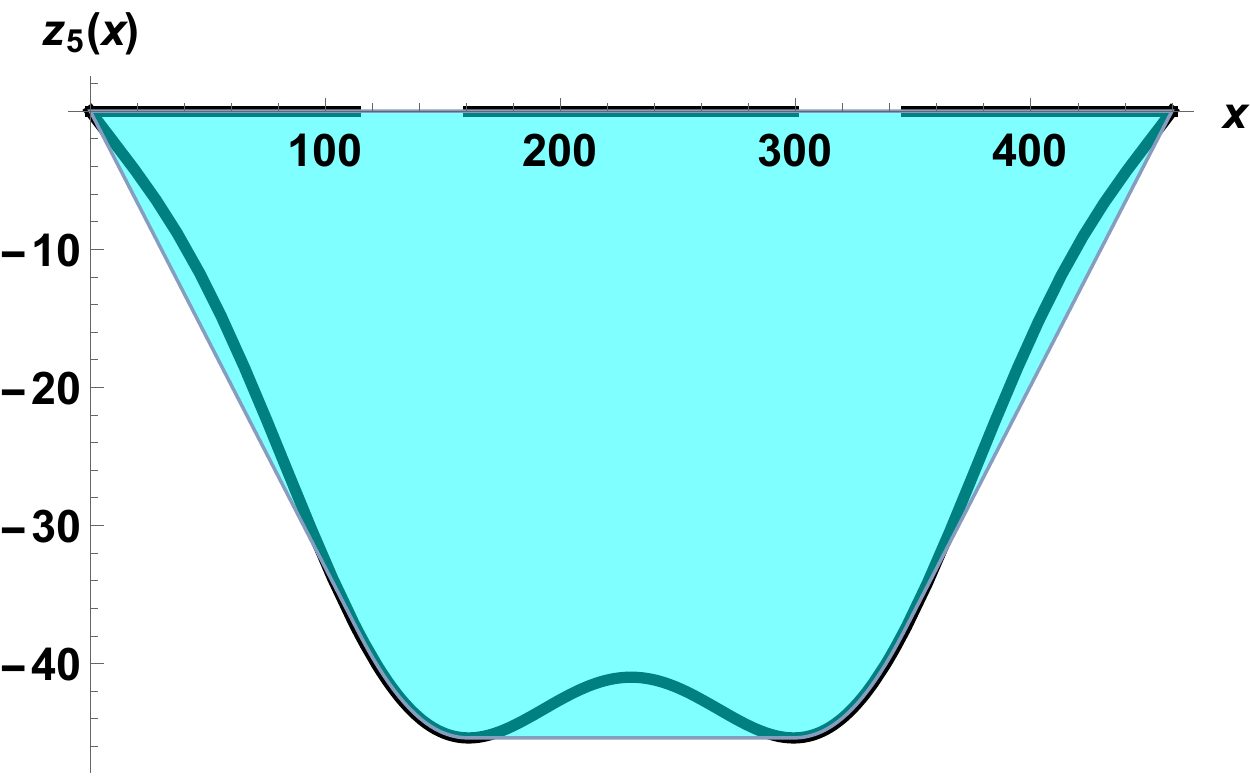}
	\end{subfigure}
	\caption{Slackening of the fifth vibrating mode when $\rho_{5} < -C_{5}^{*}$ (left) and when $\rho_{5} > C_{5}^{*}$ (right).}\label{fig:revol1}
\end{figure}

It is worthwhile noticing that the hangers slackening in even modes occurs asymmetrically with respect to the center of the beam
but, at the same time, symmetrically with respect to the value of $\rho_{n}$. To clarify this point, in Figure \ref{fig:revol2} we display the graphs
of $z_{2}$ (where $C_{2}^{*}\approx 9.3208$) when $\rho_{2} = -20$, and of $z_{4}$ (where $C_{4}^{*}\approx 2.3301$) when $\rho_{4} = 8$.
The remaining figures when $\rho_{2} > C_{2}^{*}$ or $\rho_{4} < -C_{4}^{*}$ may be obtained by simply reflecting the curves with respect to the center
of the beam.

\begin{figure}[H]
\centering
\begin{subfigure}{.4\textwidth}
    \includegraphics[scale=0.5]{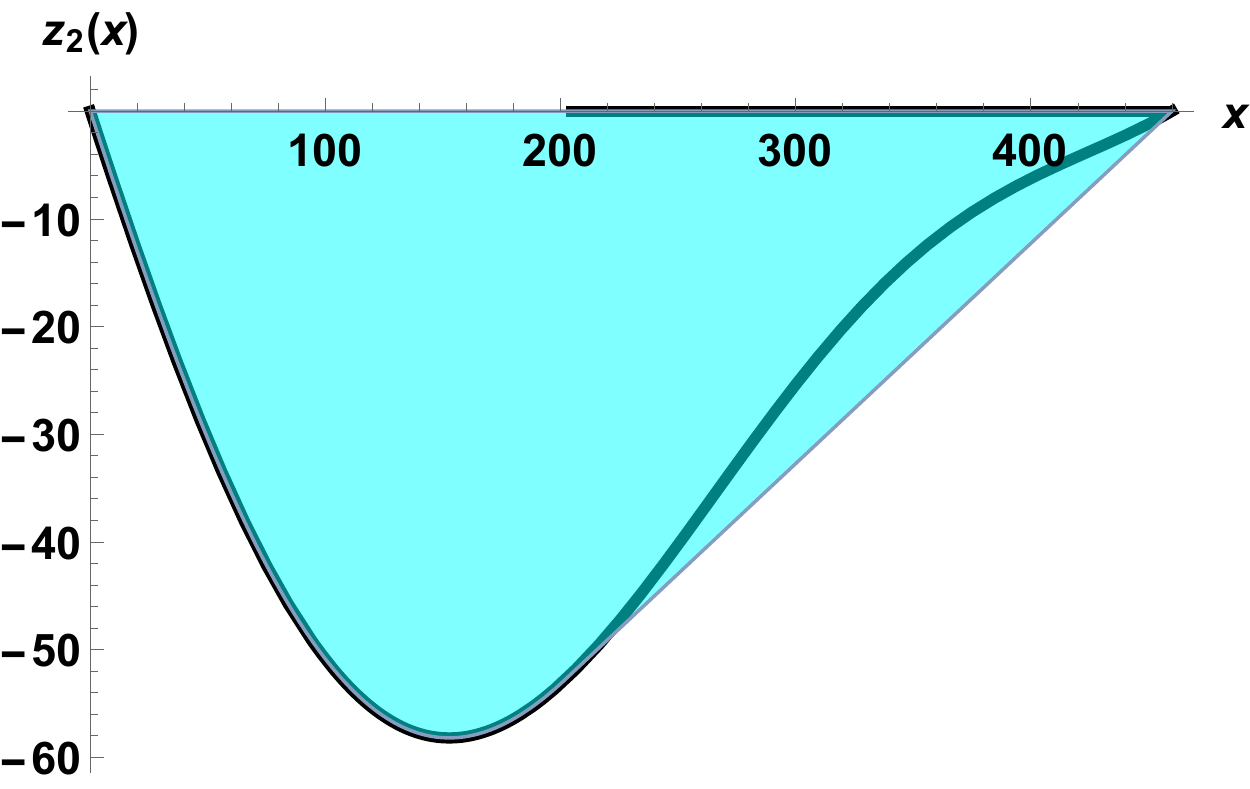}
\end{subfigure}
\quad
\begin{subfigure}{.4\textwidth}
    \includegraphics[scale=0.5]{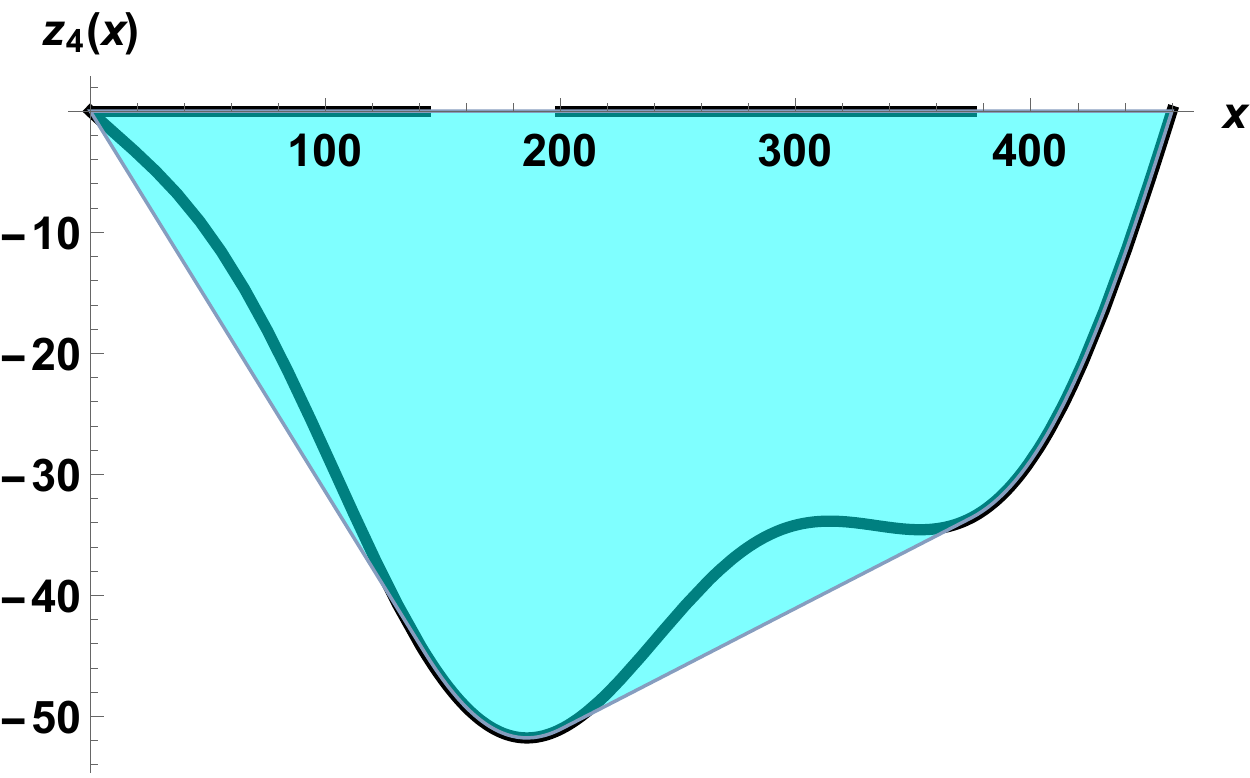}
\end{subfigure}
\caption{Slackening of the second vibrating mode when $\rho_{2} <- C_{2}^{*}$ (left), and of the fourth vibrating mode when $\rho_{4} > C_{4}^{*}$
(right).}\label{fig:revol2}
\end{figure}

The numerical values of $C_n^*$ for $n\le10$ are reported in Table \ref{table:2} where we used the parameters as in \eqref{numerical}.\par
One last issue must be addressed. In some of the pictures in Figures \ref{fig:revol0}, \ref{fig:revol1} and \ref{fig:revol2} we observe that the
endpoints of the deck $x=0$ and $x=L$ actually belong to the slackening region $\mathcal{S}_{n}$. This is clearly a physically impossible situation
since the hangers are not expected to slacken at the endpoints of the beam. Geometrically, one expects instead that the tangent lines to the
curve at the endpoints of the beam lie strictly below the graph of $z_{n}$ in $(0,L)$, that is:
\begin{equation} \label{umbral1}
z_{n}(x) > \max \{z_{n}'(0)x, \ z_{n}'(L)(x-L) \} \ \ \forall x \in (0,L), \ \ \forall n \in \N.
\end{equation}
Clearly, condition \eqref{umbral1} is not satisfied for large values of $|\rho_{n}|$, but it remains valid even when $|\rho_{n}|$ is slightly larger
than the slackening (and convexity) threshold \eqref{umbral0}. For the first ten vibrating modes,
we numerically computed the threshold $\rho_{n}^{**}$ that ensures condition \eqref{umbral1}, when $|\rho_n|\le\rho_{n}^{**}$ (if $n$ is even) and
$\rho_{n} \leq \rho_{n}^{**}$ (if $n$ is odd), with the parameters as in \eqref{numerical}. We obtained the second line in Table \ref{table:2}.

\begin{table}[H]
\small
\centering
\begin{tabular}{ | c | c | c | c | c | c | c | c | c | c | c | c | c | c |}
\hline
$n$ & 1 & 2 & 3 & 4 & 5 & 6 & 7 & 8 & 9 & 10   \\ \hline
$C_{n}^{*}$ & 37.283 & 9.321 & 4.143 & 2.330  & 1.491 & 1.035 & 0.761 & 0.582 & 0.460 & 0.372 \\ \hline
$\rho_{n}^{**}$ & 58.564 & 14.641 & 6.507 & 3.660  & 2.342 & 1.626 & 1.195 & 0.915 & 0.723 & 0.585  \\ \hline
\end{tabular}
\captionsetup{justification=centering}
\captionsetup{font={footnotesize}}
\captionof{table}{Thresholds for non-convexity and hangers slackening in the first ten vibrating modes.}
\label{table:2}
\end{table}

\section{Behavior of cables and hangers in a plate model}\label{sectplate}

The deck of a real bridge cannot be described by a simple (one-dimensional) beam since it fails to display torsional oscillations. In this section we take
advantage of the results so far obtained in order to analyze the vibrating modes of a rectangular plate $\Omega=(0,\pi)\times(-\ell,\ell)$ ($2\ell>0$ is the width
of the plate and $2\ell\ll\pi$); for simplicity, we take here $L=\pi$. Specifically, we consider a partially hinged plate whose elastic energy is given by the
Kirchhoff-Love functional, see \cite{nazarov2012hinged,sweers2009survey} for discussions on the boundary conditions and updated derivation of the corresponding
Euler-Lagrange equation. From \cite{ferrero2015partially} we know that the vibrating modes of the plate $\Omega$ are obtained by solving
the following eigenvalue problem
\begin{equation}\label{plate}
\left\lbrace
\begin{array}{ll}
\Delta^2 u = \lambda u\quad & \text{for }(x,y)\in\Omega\\
u=u_{xx}=0\quad & \text{for }(x,y)\in\{0,\pi\}\times(-\ell, \ell)\\
u_{yy}+\sigma u_{xx}=u_{yyy}+(2-\sigma)u_{xxy}=0\quad & \text{for }(x,y)\in(0,\pi)\times\{-\ell,\ell\}\, ,
\end{array}\right.	
\end{equation}
where $\sigma \in \left(0 , \frac{1}{2} \right)$ is the Poisson ratio. The boundary conditions for $x=0$ and $x=\pi$ show that the short edges of the
plate are hinged, while the conditions for $y=\pm\ell$ show that the plate is free on the long edges. Problem \eqref{plate} is the two-dimensional counterpart
of \eqref{eigenbeam}. From \cite{ferrero2015partially} we also know that the eigenvalues of \eqref{plate} may be ordered in an increasing sequence of
strictly positive numbers diverging to $+\infty$. Correspondingly, the eigenfunctions are identified by two indices $m,k\in\N_+$ and they have one of the following forms:
$$W_{m,k}(x,y)=\varphi_{m,k}(y)\sin(mx) \,\quad \text{with corresponding eigenvalue } \nu_{m,k}\, ,$$
$$\overline W_{m,k}(x,y)=\psi_{m,k}(y)\sin(mx)\,\quad \text{with corresponding eigenvalue } \mu_{m,k}\, .$$
The $\varphi_{m,k}$ are odd while the $\psi_{m,k}$ are even and this is why the $W_{m,k}$ are called {\em torsional} eigenfunctions
while the $\overline W_{m,k}$ are called {\em longitudinal} eigenfunctions. The main difference between these two classes is precisely that
$\overline{W}_{m,k}(x,\ell)=\overline{W}_{m,k}(x,-\ell)$ so that the free edges $y=\pm\ell$ are in the same position for longitudinal vibrations, while
$W_{m,k}(x,\ell)=-W_{m,k}(x,-\ell)$ so that the free edges are in opposite positions for torsional vibrations.\par
We first deal with the slightly more complicated case of torsional vibrating modes. Then the eigenvalues $\nu_{m,k}$
are the (ordered) solutions $\lambda > m^4$ of the following equation:
$$\sqrt{\lambda^{1/2} - m^2}[\lambda^{1/2} + (1-\sigma)m^2]^2 \tanh(\ell \sqrt{\lambda^{1/2} + m^2})
= \sqrt{\lambda^{1/2} + m^2}[\lambda^{1/2} - (1-\sigma)m^2]^2 \tanh(\ell \sqrt{\lambda^{1/2} - m^2})\, ,
$$
while the function $\varphi_{m,k}$ may be taken as
$$
\varphi_{m,k}(y)=\big[\nu_{m,k}^{1/2}-\!(1\!-\!\sigma)m^2\big]\tfrac{\sinh\Big(y\sqrt{\nu_{m,k}^{1/2}+m^2}\Big)}{\sinh\Big(\ell\sqrt{\nu_{m,k}^{1/2}+m^2}\Big)}
+\big[\nu_{m,k}^{1/2}+\!(1\!-\!\sigma)m^2\big]\tfrac{\sin\Big(y\sqrt{\nu_{m,k}^{1/2}-m^2}\Big)}{\sin\Big(\ell\sqrt{\nu_{m,k}^{1/2}-m^2}\Big)}\, ,
$$
see \cite{ferrero2015partially}. In particular, $\varphi_{m,k}(\ell) = 2 \sqrt{\nu_{m,k}}=-\varphi_{m,k}(-\ell)$.\par
We view both the free edges of the plate $y=\pm\ell$ as beams connected to a cable and governed by the modified Melan equation \eqref{improved}.
Then we take the following function as a solution of \eqref{improved}:
\begin{equation}\label{eigenvalue1}
w_{m,k}(x):=\alpha W_{m,k}(x,\ell)= \alpha\varphi_{m,k}(\ell)\sin(mx)=2\alpha\sqrt{\nu_{m,k}}\sin(mx)\quad\forall x\in[0,\pi],
\end{equation}
for $m,k\in \mathbb{N}$ and $\alpha\in \mathbb{R}$, a function that belongs to the family of eigenfunctions of \eqref{eigenbeam}, see
\eqref{melan}, assuming that $L=\pi$. As already mentioned, together with $w_{m,k}$ in \eqref{eigenvalue1}, for torsional modes one needs to consider
also its companion $-w_{m,k}$.\par
For longitudinal modes, one has to replace $w_{m,k}$ in \eqref{eigenvalue1} with
\begin{equation}\label{eigenvalue2}
\overline{w}_{m,k}(x):=\alpha\overline{W}_{m,k}(x,\ell)=\alpha\psi_{m,k}(\ell)\sin(mx)\quad\forall x\in[0,\pi],
\end{equation}
where $\psi_{m,k}(\ell)$ depends on the longitudinal eigenvalue $\mu_{m,k}$ of \eqref{plate}; see \cite{ferrero2015partially} for the precise
characterization of $\mu_{m,k}$. For longitudinal modes, the behavior is the same on the two opposite edges.\par
The above discussion, combined with Theorem \ref{theoshortening}, yields the following statement.

\begin{proposition}\label{shortplate}
Assume that $\tfrac{q}{H}<\tfrac{2}{5}$.\par\noindent
$\bullet$ If $m\geq 1$ is even, then the vibrating mode (either torsional or longitudinal) cannot shorten the cable.\par\noindent
$\bullet$ If $m\geq 1$ is odd and the mode is longitudinal, then there exists a (unique) critical value $\alpha^*=\alpha_{m,k}^*>0$
such that for $\alpha\in(0,\alpha^*)$ both the cables are shortened while for other values of $\alpha$ no cable is shortened.\par\noindent
$\bullet$ If $m\geq 1$ is odd and the mode is torsional, then there exists a (unique) critical value $\alpha^*=\alpha_{m,k}^*>0$ such that for
$0<|\alpha|<\alpha^*$ one and only one cable is shortened, while for other values of $\alpha$ no cable is shortened.
\end{proposition}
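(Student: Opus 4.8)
The plan is to reduce everything to Theorem \ref{theoshortening} applied with $n=m$ and $L=\pi$, since on each free edge the displacement is a pure multiple of $\sin(mx)$. The only real work lies in translating the plate amplitude $\alpha$ into the scalar $\rho$ that enters $\Gamma_m(\rho)$, and in tracking the sign structure that distinguishes the longitudinal from the torsional case.

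First I would record that both \eqref{eigenvalue1} and \eqref{eigenvalue2} have the form $\rho\sin(mx)$ on $[0,\pi]$, with $\rho=2\alpha\sqrt{\nu_{m,k}}$ in the torsional case (on the edge $y=\ell$) and $\rho=\alpha\psi_{m,k}(\ell)$ in the longitudinal case. Hence $\Gamma(w_{m,k})=\Gamma_m\big(2\alpha\sqrt{\nu_{m,k}}\big)$ and $\Gamma(\overline w_{m,k})=\Gamma_m\big(\alpha\psi_{m,k}(\ell)\big)$, so that shortening of a given cable is decided by the sign of $\Gamma_m$ at the corresponding amplitude. For the even case (first bullet) I would then simply invoke the first part of Theorem \ref{theoshortening}: for even $m$ one has $\Gamma_m(\rho)\ge0$ for every $\rho\in\R$, so no value of $\alpha$ and no sign of the amplitude can produce shortening on either edge, for either type of mode.

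For odd $m$ I would use the second part of Theorem \ref{theoshortening}, which furnishes a unique $\rho_m^*>0$ with $\Gamma_m(\rho)<0$ exactly for $\rho\in(0,\rho_m^*)$. In the longitudinal case the two edges carry the identical displacement $\alpha\psi_{m,k}(\ell)\sin(mx)$, so both cables shorten simultaneously precisely when $\alpha\psi_{m,k}(\ell)\in(0,\rho_m^*)$; after fixing the sign normalization so that $\psi_{m,k}(\ell)>0$, this is the interval $\alpha\in(0,\alpha^*)$ with $\alpha^*:=\rho_m^*/\psi_{m,k}(\ell)$, yielding the second bullet. In the torsional case the edge $y=\ell$ sees amplitude $2\alpha\sqrt{\nu_{m,k}}$ while the edge $y=-\ell$ sees the opposite amplitude $-2\alpha\sqrt{\nu_{m,k}}$; since the shortening interval $(0,\rho_m^*)$ is strictly one-signed, the two edges shorten on the disjoint $\alpha$-intervals $(0,\alpha^*)$ and $(-\alpha^*,0)$ respectively, with $\alpha^*:=\rho_m^*/\big(2\sqrt{\nu_{m,k}}\big)$. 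Thus for $0<|\alpha|<\alpha^*$ exactly one cable shortens and for all other $\alpha$ none does, which is the third bullet; uniqueness of $\alpha^*$ is inherited from that of $\rho_m^*$.

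The only genuine subtlety, and the step I would treat with care, is the sign of $\psi_{m,k}(\ell)$ in the longitudinal case: the statement asserts $\alpha^*>0$ together with a shortening interval of the form $(0,\alpha^*)$, which presupposes $\psi_{m,k}(\ell)>0$. This can be arranged by the usual freedom to replace an eigenfunction by its negative (which merely relabels $\alpha\mapsto-\alpha$), or checked directly from the explicit expression for $\psi_{m,k}$ recorded in \cite{ferrero2015partially}; in the torsional case no such issue arises because $\varphi_{m,k}(\ell)=2\sqrt{\nu_{m,k}}>0$ is manifestly positive.
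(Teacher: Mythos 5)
Your proposal is correct and follows essentially the same route as the paper: the authors also reduce to Theorem \ref{theoshortening} with $n=m$, $L=\pi$, via the identifications $\rho=\alpha\varphi_{m,k}(\ell)=2\alpha\sqrt{\nu_{m,k}}$ (torsional, with opposite amplitudes on the two edges) and $\rho=\alpha\psi_{m,k}(\ell)$ (longitudinal, equal amplitudes), taking $\alpha_{m,k}^*=\rho_m^*/\varphi_{m,k}(\ell)$ or $\rho_m^*/\psi_{m,k}(\ell)$. Your remark on normalizing the sign of $\psi_{m,k}(\ell)$ is a point the paper leaves implicit, but it does not change the argument.
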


Following the guideline of Section \ref{thresholdcable}, one may then determine the exact critical values $\alpha_{m,k}^*$ (for odd $m$).
It suffices to consider the critical values $\rho_n^*$ from Theorem \ref{theoshortening} and to take
$$\alpha_{m,k}^*=\frac{\rho_m^*}{\varphi_{m,k}(\ell)}\quad\mbox{or}\quad\alpha_{m,k}^*=\frac{\rho_m^*}{\psi_{m,k}(\ell)}\, ,$$
depending on whether the vibration is torsional or longitudinal.\par
Regarding slackening and the loss of convexity, the above discussion, combined with Propositions \ref{slackening1} and \ref{propmodes},
yields the following statement.

\begin{proposition}\label{slackeningplate}
Let $w$ be the solution of \eqref{improved} and assume that one of the free edges of $\Omega$ is in position $w$. Let $z$ be as in \eqref{z}.
If $\mathcal{S}\neq\emptyset$, then
$$\{x\in(0,L);\, z''(x)\le0\}\subsetneqq\mathcal{S}\, .$$

In particular, if $w_{m,k}$ in \eqref{eigenvalue1} (resp.\ $\overline{w}_{m,k}$ in \eqref{eigenvalue2}) is the position of one of the free edges
of $\Omega$, then slackening of the hangers on that edge occurs if and only if
\begin{eqnarray*}
\alpha_1>\frac{q}{H\varphi_{1,k}(\ell)}\mbox{ when }m=1\, ,&\ &|\alpha_m|>\frac{q}{Hm^2\varphi_{m,k}(\ell)}\mbox{ when }m\ge2\\
\Big(\mbox{resp. }\alpha_1>\frac{q}{H\psi_{1,k}(\ell)}\mbox{ when }m=1\, ,&\ &|\alpha_m|>\frac{q}{Hm^2\psi_{m,k}(\ell)}\mbox{ when }m\ge2\Big)\, ;
\end{eqnarray*}
in this case, the position of the cable is described by $z_{m,k}^{**}$ (with $z_{m,k}=g+w_{m,k}$, resp.\ $z_{m,k}=g+\overline{w}_{m,k}$).
\end{proposition}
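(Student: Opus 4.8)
The plan is to reduce both assertions to the one-dimensional results already established for the beam, since by construction each free edge $y=\pm\ell$ is viewed as a beam connected to a cable and governed by the modified Melan equation \eqref{improved}. The key observation is that, with $L=\pi$, the position of the edge under consideration is a pure sinusoidal mode: for torsional vibrations \eqref{eigenvalue1} gives $w_{m,k}(x)=\rho\sin(mx)$ with $\rho=\alpha\varphi_{m,k}(\ell)=2\alpha\sqrt{\nu_{m,k}}$, while for longitudinal vibrations \eqref{eigenvalue2} gives $\overline{w}_{m,k}(x)=\rho\sin(mx)$ with $\rho=\alpha\psi_{m,k}(\ell)$. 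In either case this is exactly an eigenfunction of \eqref{eigenbeam} of the form \eqref{melan} with $n=m$, so the entire machinery of Section \ref{hangers} applies verbatim.

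First I would dispatch the strict-inclusion statement. Since the edge in position $w$ is governed by \eqref{improved} and $z=g+w$ is defined as in \eqref{z}, the claim $\{x\in(0,L);\,z''(x)\le0\}\subsetneqq\mathcal{S}$ is nothing but Proposition \ref{slackening1} read on that edge; no new argument is needed, only the remark that the geometry of the cable--edge system is identical to the beam--cable system of Section \ref{hangers}.

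For the explicit thresholds I would invoke Proposition \ref{propmodes} with $n=m$. Using $L=\pi$ in \eqref{umbral0} yields $C_m^*=\frac{q}{H}\left(\frac{\pi}{m\pi}\right)^2=\frac{q}{Hm^2}$, and the characterization \eqref{iff} states that slackening occurs if and only if $\rho>C_1^*$ when $m=1$ and $|\rho|>C_m^*$ when $m\ge2$. It then remains only to substitute $\rho=\alpha\varphi_{m,k}(\ell)$ (torsional) or $\rho=\alpha\psi_{m,k}(\ell)$ (longitudinal) and to divide through. Here the one point demanding care --- and the only genuine obstacle --- is the sign of the edge-amplitude factor, which must be positive both to preserve the direction of the inequalities and to guarantee that the sign of $\rho$ agrees with that of $\alpha$ (this is essential for the one-sided condition in the fundamental case $m=1$). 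For the torsional factor this is immediate, since $\varphi_{m,k}(\ell)=2\sqrt{\nu_{m,k}}>0$; for the longitudinal factor one normalizes $\psi_{m,k}$ so that $\psi_{m,k}(\ell)>0$, relying on the explicit characterization of $\mu_{m,k}$ and $\psi_{m,k}$ in \cite{ferrero2015partially}. Dividing the inequalities of \eqref{iff} by the positive quantity $\varphi_{m,k}(\ell)$ (resp.\ $\psi_{m,k}(\ell)$) then produces exactly the stated thresholds $\frac{q}{H\varphi_{1,k}(\ell)}$ and $\frac{q}{Hm^2\varphi_{m,k}(\ell)}$ (resp.\ the analogues with $\psi_{m,k}$), completing the argument; the final assertion that the cable is then described by $z_{m,k}^{**}$ is inherited directly from the corresponding statement in Proposition \ref{propmodes}.
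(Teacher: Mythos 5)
Your proposal is correct and follows essentially the same route as the paper, which likewise obtains this proposition as an immediate consequence of Propositions \ref{slackening1} and \ref{propmodes} by viewing each free edge as a beam with $L=\pi$, $n=m$, and amplitude $\rho=\alpha\varphi_{m,k}(\ell)$ (resp.\ $\alpha\psi_{m,k}(\ell)$), so that $C_m^*=\frac{q}{Hm^2}$ and the thresholds follow by dividing by the positive edge factor. Your explicit remark on the positivity of $\varphi_{m,k}(\ell)=2\sqrt{\nu_{m,k}}$ and the normalization $\psi_{m,k}(\ell)>0$ is a point the paper leaves implicit, and it is the right thing to check.
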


The final step consists in considering the evolution equation modeling the vibrations of the partially hinged rectangular plate $\Omega$.
According to \cite{ferrero2015partially}, this leads to the following fourth-order wave-type equation:
\begin{equation} \label{plate_time0}
\left\lbrace
\begin{array}{ll}
u_{tt}+\Delta^2 u = 0\quad & \text{for }(x,y,t)\in\Omega\times\R_+\\
u=u_{xx}=0\quad & \text{for }(x,y,t)\in\{0,\pi\}\times(-\ell,\ell)\times\R_+\\
u_{yy}+\sigma u_{xx}=u_{yyy}+(2-\sigma)u_{xxy}=0\quad & \text{for }(x,y,t)\in(0,\pi)\times\{-\ell,\ell\}\times\R_+\, .
\end{array}
\right.
\end{equation}

We wish to analyze here the evolution of the cable shortening and of the hanger slackening for the torsional vibrating modes $W_{m,k}$ of \eqref{plate};
as for the stationary case, the behavior of the longitudinal modes $\overline{W}_{m,k}$ is simpler.
Therefore, we associate to \eqref{plate_time0} the following initial conditions
\begin{equation}\label{plate_time1}
u(x,y,0) =B W_{m,k}(x,y),\quad u_{t}(x,y,0)=0\qquad\forall (x,y) \in \Omega,
\end{equation}
for some $B\in\mathbb{R}$. The problem \eqref{plate_time0}-\eqref{plate_time1} may be solved by separating variables and the solution is
\begin{equation} \label{plate_time2}
u_{m,k}(x,y,t) = B \cos\big(\sqrt{\nu_{m,k}}\, t\big)W_{m,k}(x,y)\qquad\forall(x,y,t)\in\Omega\times\R_+\, .
\end{equation}
Again, we view both the free edges of $\Omega$ as beams connected to a cable and governed by the modified Melan equation \eqref{improved}.
Therefore, we consider the restriction to the free edge $y=\ell$ (the case $y=-\ell$ being similar) of the function $u_{m,k}$ in \eqref{plate_time2}:
\begin{equation} \label{plate_time3}
v_{m,k}(x,t):=u_{m,k}(x,\ell,t)= B \cos\big(\sqrt{\nu_{m,k}}\, t\big)\varphi_{m,k}(\ell)\sin(mx)\qquad\forall(x,t)\in(0,\pi)\times\R_+\, ,
\end{equation}
see \eqref{eigenvalue1}. Similarly, for the longitudinal modes, we consider the function
\begin{equation} \label{plate_time5}
\overline{v}_{m,k}(x,t):= B \cos\big(\sqrt{\mu_{m,k}}\, t\big)\psi_{m,k}(\ell)\sin(mx)\qquad\forall(x,t)\in(0,\pi)\times\R_+\, ,
\end{equation}
see \eqref{eigenvalue2}. We are interested in determining the conditions under which the cables shorten their length (in odd vibrating modes) or when the hangers slacken.
Unlike the preceding situations, such conditions will now be observed over a space-time region, because the coefficients representing the amplitude of
the expressions \eqref{plate_time3} and \eqref{plate_time5} are periodic functions in time.\par
Concerning the shortening of the cables, we introduce some notations. Let $\alpha_{m,k}^*>0$ be as in Proposition \ref{shortplate}.
If $m\geq 1$ is odd and the mode is longitudinal, put
$$I_S=\{t\ge0;\, 0<B\cos(\sqrt{\mu_{m,k}}\, t)<\alpha_{m,k}^*\}\, ,\quad I_N=\R_+\setminus I_S\, .$$
If $m\geq 1$ is odd and the mode is torsional, put
$$I^S=\{t\ge0;\, 0<|B\cos(\sqrt{\nu_{m,k}}\, t)|<\alpha_{m,k}^*\}\, ,\quad I^N=\R_+\setminus I^{S}\, .$$
Note that all these sets are nonempty, although $I^N$ may have null measure: this happens if $|B|\le\alpha_{m,k}^*$. Then, from Proposition
\ref{shortplate} we deduce the following statement.

\begin{figure}[H]
	\centering
	\includegraphics[scale=0.6] {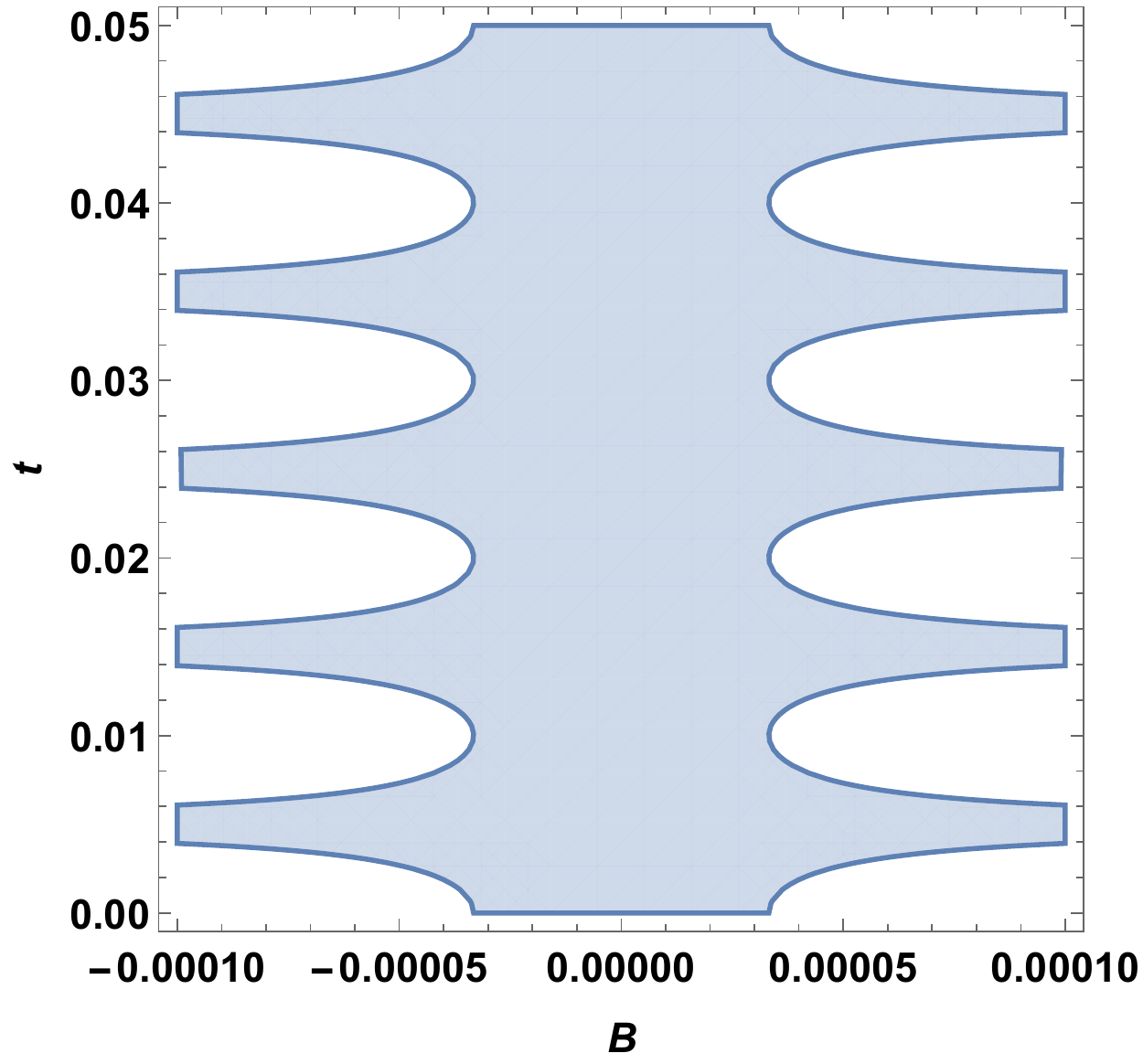}
	\caption{For $B \in [-0.0001,0.0001]$, values of $t\in I^S$ (shaded) provoking cable shortening in the third torsional mode.}\label{subfig:largo}
\end{figure}

\begin{proposition}\label{shortplatevolution}
Assume that $\tfrac{q}{H}<\tfrac{2}{5}$.\par\noindent
$\bullet$ If $m\geq 1$ is even, then the vibrating mode (either torsional or longitudinal) does not shorten the cables for any $t>0$.\par\noindent
$\bullet$ If $m\geq 1$ is odd and the mode is longitudinal, then both the cables are shortened if $t\in I_S$ whereas no cable is shortened if
$t\in I_N$.\par\noindent
$\bullet$ If $m\geq 1$ is odd and the mode is torsional, then one and only one cable is shortened when $t\in I^S$ whereas no cable is shortened if
$t\in I^N$.
\end{proposition}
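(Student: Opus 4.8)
The plan is to reduce the evolution statement to the stationary Proposition \ref{shortplate} by freezing time. The crucial observation is that, for every fixed $t\ge0$, the restriction \eqref{plate_time3} to the edge $y=\ell$ has spatial profile
$$
v_{m,k}(\cdot,t)=\big[B\cos(\sqrt{\nu_{m,k}}\,t)\big]\,\varphi_{m,k}(\ell)\sin(mx)=\big[B\cos(\sqrt{\nu_{m,k}}\,t)\big]\,W_{m,k}(x,\ell)\, ,
$$
which is again a multiple of the vibrating mode $\sin(mx)$, hence precisely of the form $w_{m,k}$ in \eqref{eigenvalue1} but with the constant amplitude $\alpha$ replaced by the time-dependent coefficient $\alpha(t):=B\cos(\sqrt{\nu_{m,k}}\,t)$; the same holds in the longitudinal case \eqref{plate_time5} with $\alpha(t):=B\cos(\sqrt{\mu_{m,k}}\,t)$. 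Thus at each instant the configuration of the two free edges is exactly one of the static configurations already analyzed, and the conclusions of Proposition \ref{shortplate} apply verbatim with $\alpha=\alpha(t)$. It then only remains to translate the amplitude conditions of Proposition \ref{shortplate} into conditions on $t$ through the definitions of $I_S,I_N$ (longitudinal) and $I^S,I^N$ (torsional).

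First I would dispose of the even case: by the first bullet of Proposition \ref{shortplate}, for even $m$ no value of $\alpha$ shortens the cable, so in particular $\alpha(t)$ never does, for any $t>0$; this uses the assumption $\tfrac{q}{H}<\tfrac{2}{5}$ inherited from Theorem \ref{theoshortening}. For odd $m$ in the longitudinal case, both edges carry the same profile $\overline{v}_{m,k}(\cdot,t)$ (since $\psi_{m,k}$ is even, the positions $y=\pm\ell$ coincide), with common coefficient $\alpha(t)=B\cos(\sqrt{\mu_{m,k}}\,t)$. By the second bullet of Proposition \ref{shortplate} both cables are shortened exactly when $\alpha(t)\in(0,\alpha_{m,k}^*)$, i.e.\ when $0<B\cos(\sqrt{\mu_{m,k}}\,t)<\alpha_{m,k}^*$, which is by definition $t\in I_S$; for the remaining times $t\in I_N$ neither cable shortens.

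The torsional case carries the only genuine bookkeeping and is where I would be most careful. Here the two edges move in opposition: since $\varphi_{m,k}$ is odd one has $W_{m,k}(x,-\ell)=-W_{m,k}(x,\ell)$, so the edge $y=\ell$ corresponds to the coefficient $\alpha(t)$ while the edge $y=-\ell$ corresponds to the coefficient $-\alpha(t)$. Recalling from Theorem \ref{theoshortening} that an odd mode shortens \emph{only} for amplitudes in the one-sided interval $(0,\rho_m^*)$, the edge $y=\ell$ shortens iff $\alpha(t)\in(0,\alpha_{m,k}^*)$ and the edge $y=-\ell$ shortens iff $-\alpha(t)\in(0,\alpha_{m,k}^*)$, i.e.\ iff $\alpha(t)\in(-\alpha_{m,k}^*,0)$. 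These two conditions are mutually exclusive and their union is exactly $0<|\alpha(t)|<\alpha_{m,k}^*$, that is $t\in I^S$; hence for $t\in I^S$ one and only one cable (the one whose coefficient is positive) shortens, while for $t\in I^N$ one has $\alpha(t)=0$ or $|\alpha(t)|\ge\alpha_{m,k}^*$, so that neither $\alpha(t)$ nor $-\alpha(t)$ lies in $(0,\alpha_{m,k}^*)$ and no cable is shortened. I do not expect any analytic difficulty here: the entire argument is a pointwise-in-time application of the already established stationary dichotomy, and the only care required is to match the sign conventions of the two opposite edges in the torsional regime against the one-sided shortening interval $(0,\rho_m^*)$ of Theorem \ref{theoshortening}.
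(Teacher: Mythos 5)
Your proposal is correct and follows essentially the same route as the paper: the paper likewise deduces Proposition \ref{shortplatevolution} directly from Proposition \ref{shortplate} by observing that at each fixed $t$ the edge profile is a static mode with amplitude $\alpha(t)=B\cos(\sqrt{\nu_{m,k}}\,t)$ (resp.\ $B\cos(\sqrt{\mu_{m,k}}\,t)$), the sets $I_S,I_N,I^S,I^N$ being defined precisely to encode the stationary amplitude conditions. Your careful matching of the sign conventions for the two opposite edges in the torsional case against the one-sided shortening interval $(0,\rho_m^*)$ is exactly the point that makes the deduction work.
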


Once more we emphasize the striking difference between odd and even modes.
Proposition \ref{shortplatevolution} is illustrated in Figure \ref{subfig:largo} by shading the sub-regions of the rectangle
$(B,t)\in[-0.0001 , 0.0001]\times[0,0.05]$ in which $t\in I^S$ for the third torsional mode.
It turns out that for $0<|B|\lessapprox0.000032$, for almost every $t>0$ one (and only one) cable is shortened, whereas for larger values of $|B|$ the white
regions (no shortening) have positive measure.\par
Also the slackening of the hangers (and the loss of convexity) in all the vibrating modes is now observed in a space-time region which
periodically-in-time reproduces itself. In order to discuss together the longitudinal and torsional cases, we use the same notation to denote the
function to be convexified:
\begin{equation} \label{plate_time4}
z_{m,k}(x,t)=g(x)+v_{m,k}(x,t)\ \Big(\mbox{resp. }z_{m,k}(x,t)=g(x)+\overline{v}_{m,k}(x,t)\Big)\qquad\forall(x,t)\in(0,\pi)\times\R_+\, ,
\end{equation}
where $v_{m,k}$ and $\overline{v}_{m,k}$ are as in \eqref{plate_time3} and \eqref{plate_time5}. Concerning the non-convexity regions,
for a given $B\in\mathbb{R}$ they are characterized by the points $(x,t)\in [0,\pi] \times [0,\infty)$ that satisfy the inequality:
$$
\dfrac{\partial^2 z_{m,k}}{\partial x^2}(x,t) \leq 0
$$
or, equivalently, by the points $(x,t) \in [0,\pi] \times [0,\infty)$ in which:
\begin{equation} \label{plate_time6}
B m^2 \varphi_{m,k}(\ell) \cos\left( \sqrt{\nu_{m,k}} \ t \right)\sin \left(m x \right) \geq \frac{q}{H}
\quad\Big(\mbox{resp. }B m^2 \psi_{m,k}(\ell) \cos\left( \sqrt{\nu_{m,k}} \ t \right)\sin \left(m x \right) \geq \frac{q}{H}\Big)\, .
\end{equation}
Notice that inequality \eqref{plate_time6} defines a region of $\mathbb{R}^2$ of positive measure only when $|B|>C_{m,k}^{*}$, where the convexity threshold is
now given by:
$$
C_{m,k}^{*} = \frac{q}{Hm^2\varphi_{m,k}(\ell)} \mbox{ for the torsional modes}, \ \ C_{m,k}^{*} = \frac{q}{Hm^2\psi_{m,k}(\ell)} \mbox{ for the longitudinal modes},
$$
for every integers $m,k\geq 1$. Precisely, given $B \in \mathbb{R}$ and integers $m$ and $k$, let us put:
$$
\alpha_{m,k}(t) = B \cos\left(\sqrt{\nu_{m,k}} \ t \right)\quad\Big(\mbox{resp. }\alpha_{m,k}(t) = B \cos\left(\sqrt{\mu_{m,k}} \ t \right)\Big)\quad\forall t \geq 0.
$$
Then, as a consequence of Proposition \ref{slackeningplate}, we obtain the following statement.

\begin{proposition}\label{slackeningplate2}
Let $u$ be the solution of \eqref{plate_time0} and assume that one of the free edges of $\Omega$ is in position $v_{m,k}$ as in \eqref{plate_time3} or
$\overline{v}_{m,k}$ as in \eqref{plate_time5}. Let $z_{m,k}$ be as in \eqref{plate_time4}, depending on the vibrating mode considered.
If $|B|>C_{m,k}^{*}$, then $\mathcal{S}\neq\emptyset$. Furthermore, whenever $|\alpha_{m,k}(t)| > C_{m,k}^{*}$ we have:
$$
\left\{ x \in (0,\pi) \ | \ \dfrac{\partial^2 z_{m,k}}{\partial x^2}(x,t) \leq 0 \right\} \subsetneqq \mathcal{S}\, .
$$
More precisely, if $|B|>C_{m,k}^{*}$ and if $v_{m,k}$ in \eqref{plate_time3} (resp.\ $\overline{v}_{m,k}$ in \eqref{plate_time5}) is the position of one of the free edges
of $\Omega$, then slackening of the hangers on that edge occurs for all $t>0$ such that:
\begin{eqnarray*}
\alpha_{1,k}(t)>\frac{q}{H\varphi_{1,k}(\ell)}\mbox{ when }m=1\, ,&\ &|\alpha_{m,k}(t)|>\frac{q}{Hm^2\varphi_{m,k}(\ell)}\mbox{ when }m\ge2\\
\Big(\mbox{resp. }\alpha_{1,k}(t)>\frac{q}{H\psi_{1,k}(\ell)}\mbox{ when }m=1\, ,&\ &|\alpha_{m,k}(t)|>\frac{q}{Hm^2\psi_{m,k}(\ell)}\mbox{ when }m\ge2\Big)\, ;
\end{eqnarray*}
in this case, the position of the cable is described by $z_{m,k}^{**}$, with $z_{m,k}=g+v_{m,k}$ as in \eqref{plate_time4}.
\end{proposition}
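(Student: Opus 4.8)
The plan is to reduce the evolution statement to the stationary Propositions \ref{propmodes}, \ref{slackening1} and \ref{slackeningplate} by freezing the time variable. The decisive observation is that, for each fixed $t\ge0$, the restriction $v_{m,k}(\cdot,t)$ in \eqref{plate_time3} reads
\[
v_{m,k}(x,t)=\rho(t)\,\sin(mx)\, ,\qquad \rho(t):=\alpha_{m,k}(t)\,\varphi_{m,k}(\ell)\, ,
\]
which is \emph{exactly} a beam vibrating mode of the type \eqref{melan} with $L=\pi$, $n=m$ and amplitude $\rho(t)$. Consequently $z_{m,k}(\cdot,t)=g+v_{m,k}(\cdot,t)$ in \eqref{plate_time4} coincides, at each instant, with the function $z_m$ appearing in Proposition \ref{propmodes}, the only change being that the constant amplitude $\rho$ is replaced by the time-dependent coefficient $\rho(t)$. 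Since convexity of the cable is a spatial property holding at each instant, I would understand the convexification $z_{m,k}^{**}(\cdot,t)$ as that of the map $x\mapsto z_{m,k}(x,t)$ for each frozen $t$, and I would make this interpretation explicit at the outset.

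With this reduction in hand, each assertion follows by applying the corresponding stationary result to the slice at time $t$. By Proposition \ref{propmodes} applied to the amplitude $\rho(t)$, slackening in the $x$-slice occurs if and only if $\rho(t)>C_1^*$ when $m=1$ and $|\rho(t)|>C_m^*$ when $m\ge2$, where $C_m^*=q/(Hm^2)$ by \eqref{umbral0} with $L=\pi$. Because $\varphi_{m,k}(\ell)=2\sqrt{\nu_{m,k}}>0$ is a positive constant in $x$, dividing these inequalities by $\varphi_{m,k}(\ell)$ turns them into the stated thresholds on $\alpha_{m,k}(t)$ (resp.\ $|\alpha_{m,k}(t)|$), since $C_m^*/\varphi_{m,k}(\ell)=C_{m,k}^*$; the longitudinal case is identical upon replacing $\varphi_{m,k}(\ell)$ by $\psi_{m,k}(\ell)$. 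The proper inclusion $\{x\in(0,\pi):\frac{\partial^2 z_{m,k}}{\partial x^2}(x,t)\le0\}\subsetneqq\mathcal{S}$, valid whenever $|\alpha_{m,k}(t)|>C_{m,k}^*$, is then nothing but Proposition \ref{slackening1} (equivalently the first part of Proposition \ref{slackeningplate}) applied to the slice $z_{m,k}(\cdot,t)$, while the claim that the cable is described by $z_{m,k}^{**}$ is the slice-wise reading of Proposition \ref{summary}.

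For the nonemptiness claim I would argue as follows: the coefficient $\alpha_{m,k}(t)=B\cos(\sqrt{\nu_{m,k}}\,t)$ attains the value $|B|$ at some $t^*\ge0$ because the cosine sweeps the whole interval $[-1,1]$; since $|B|>C_{m,k}^*$ by hypothesis, the fixed-time criterion above produces a nonempty slackening set in the slice $t=t^*$, whence $\mathcal{S}\neq\emptyset$. Here the only point deserving care is the sign asymmetry for $m=1$: when $B<0$ one cannot use $t=0$ (where $\alpha_{1,k}(0)=B<0$ forces no slackening), but must instead select $t^*$ so that $\alpha_{1,k}(t^*)=|B|>0>C_{1,k}^*$, which is possible precisely because the cosine changes sign.

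I expect no genuine analytical obstacle: once the slice-wise reading of the convexification is accepted, the entire statement is a parametrized-in-$t$ transcription of the stationary propositions, and no new estimate (in particular no use of $q/H<2/5$, which enters only the shortening results) is required. The mildest subtlety will be the bookkeeping of the $m=1$ sign condition described above, together with the choice of the appropriate time $t^*$ realizing the maximal amplitude $|B|$.
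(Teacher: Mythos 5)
Your proposal is correct and follows essentially the same route as the paper: the paper likewise obtains Proposition \ref{slackeningplate2} by freezing $t$, observing that $v_{m,k}(\cdot,t)$ is a beam mode of amplitude $\alpha_{m,k}(t)\varphi_{m,k}(\ell)$ (resp.\ $\psi_{m,k}(\ell)$), and invoking Propositions \ref{propmodes}, \ref{slackening1} and \ref{slackeningplate} slice-wise with the threshold $C^*_{m,k}=C^*_m/\varphi_{m,k}(\ell)$. Your explicit treatment of the $m=1$ sign asymmetry and of the choice of $t^*$ realizing $|B|$ only makes more precise what the paper leaves implicit.
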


Proposition \ref{slackeningplate2} defines the slackening regions in the $(x,t)$-plane. Since these are difficult to determine explicitly, we focus
our attention on the non-convexity regions.
As a first example, we take the second torsional mode, whose convexity threshold is $C_{2,1}^{*} \approx 1.52 \times 10^{-4}$.
In this case, setting $B = \pm 2 \times 10^{-4}$ and considering the rectangle $(x,t) \in [0,\pi] \times [0,0.035]$, we obtained Figure \ref{fig:reg2}.
\begin{figure}[H]
	\centering
	\begin{subfigure}{.4\textwidth}
		\includegraphics[scale=0.5] {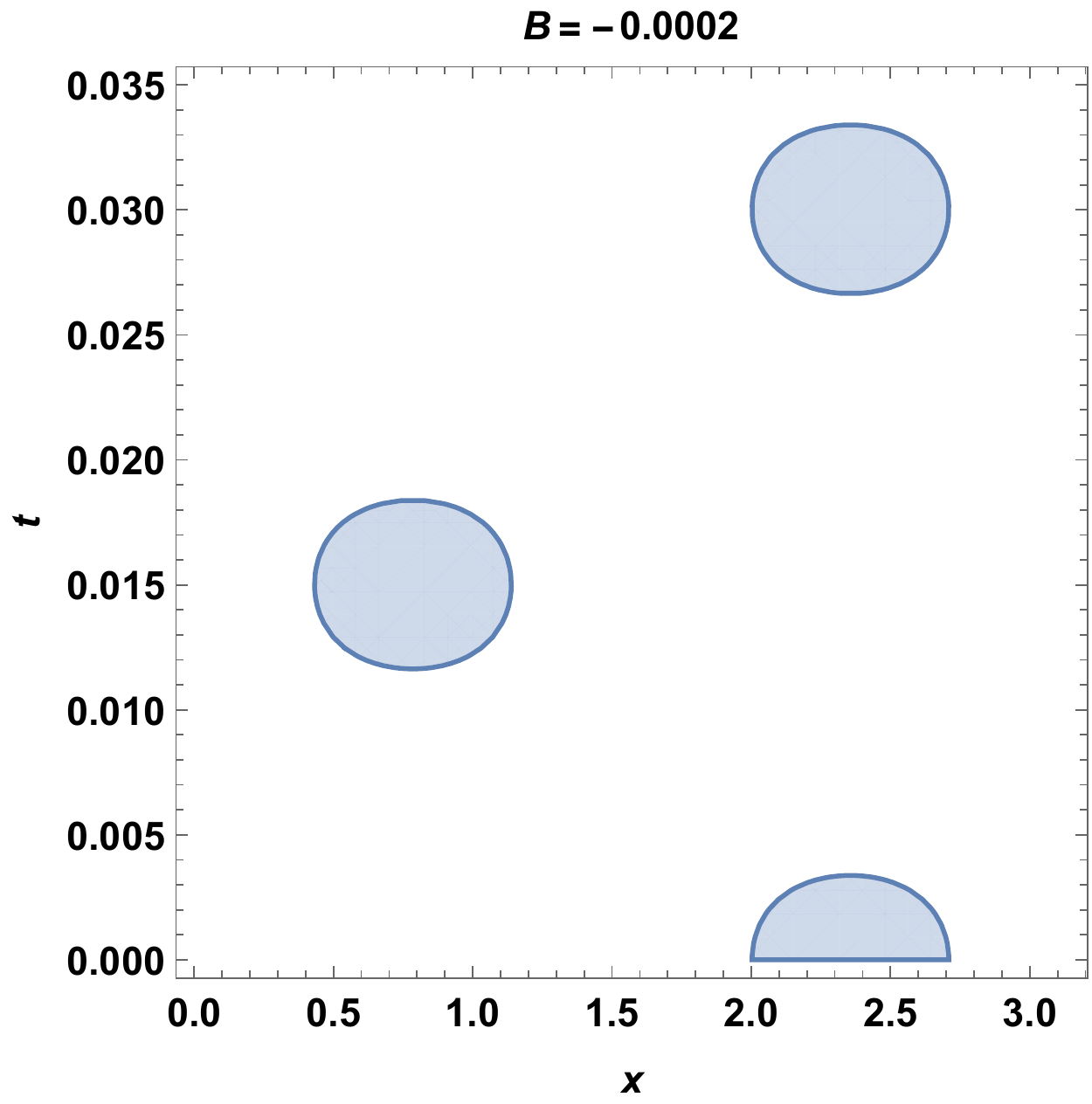}
	\end{subfigure}
	\quad
	\begin{subfigure}{.4\textwidth}
		\includegraphics[scale=0.5]{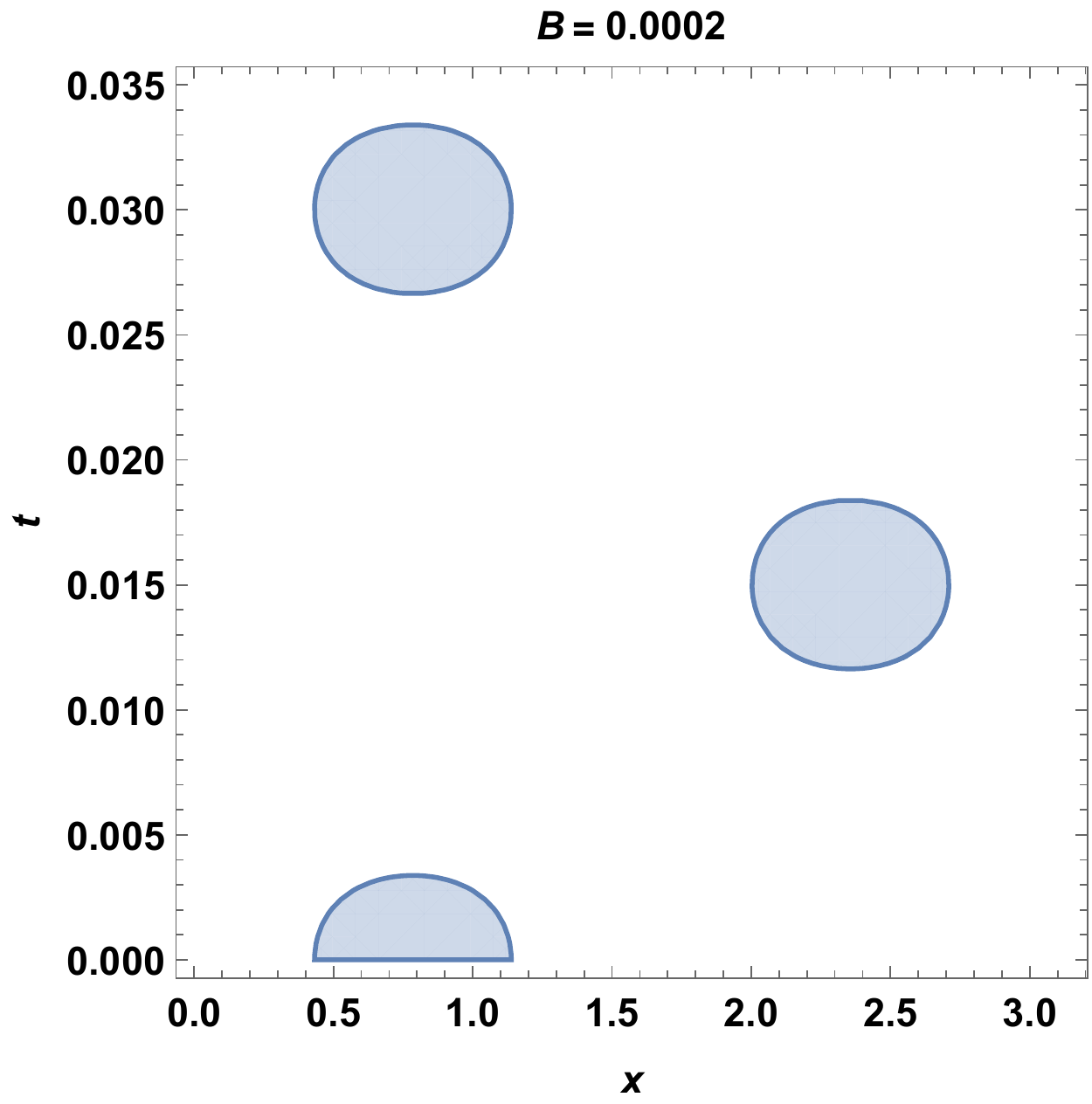}
	\end{subfigure}
    \caption{Non-convexity region of the second torsional mode for a time-varying amplitude.}
    \label{fig:reg2}
\end{figure}

Similar plots are obtained for the function $z_{3,1}$, whose convexity threshold is $C_{3,1}^{*} \approx 4.5 \times 10^{-5}$. By taking $B = \pm 1 \times 10^{-4}$,
we get the following sub-region of the space-time rectangle $(x,t)\in[0,\pi] \times [0,0.025]$ defined by Proposition \ref{slackeningplate2}:
\begin{figure}[H]
	\centering
	\begin{subfigure}{.4\textwidth}
		\includegraphics[scale=0.5] {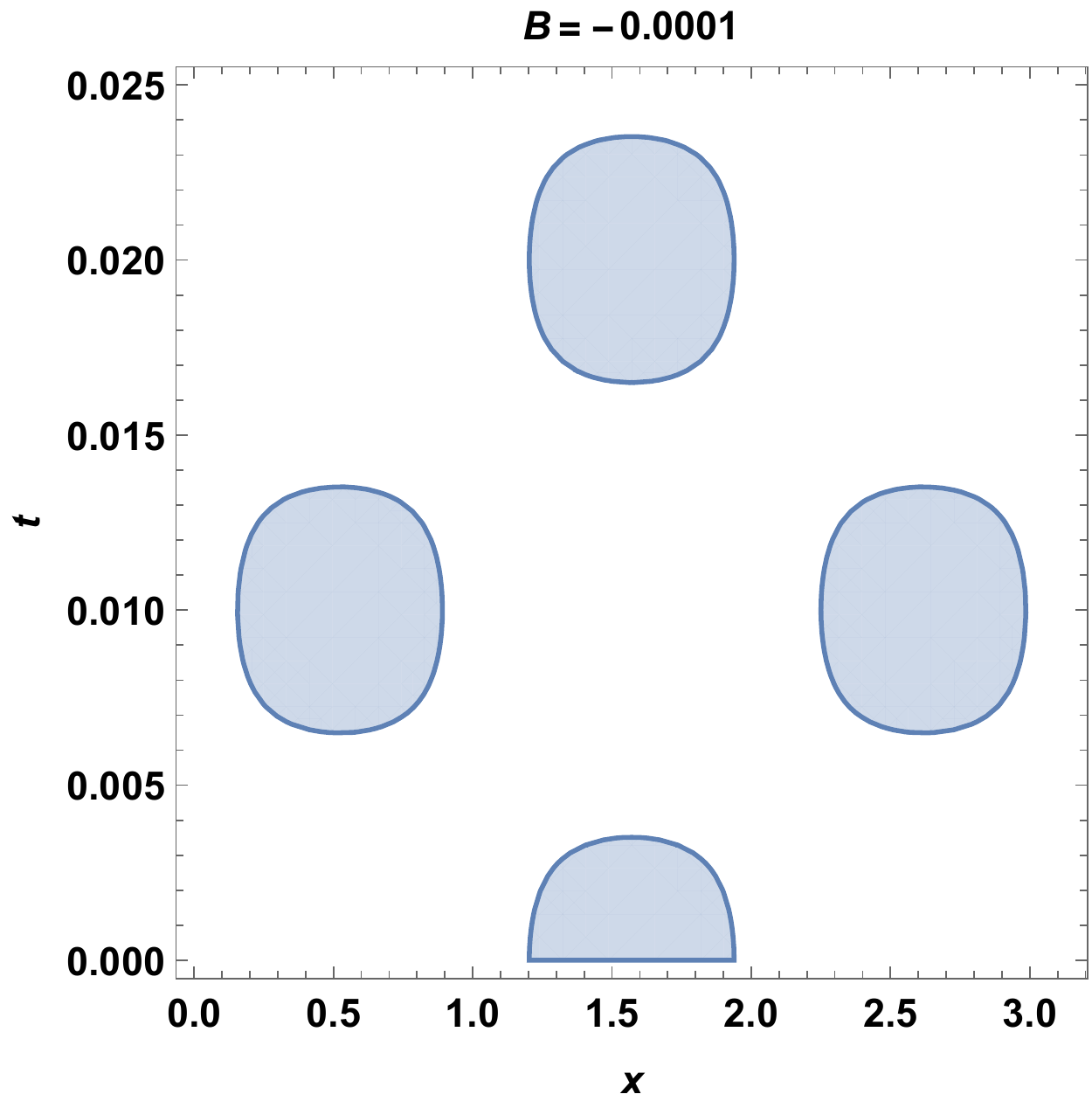}
	\end{subfigure}
	\quad
	\begin{subfigure}{.4\textwidth}
		\includegraphics[scale=0.5]{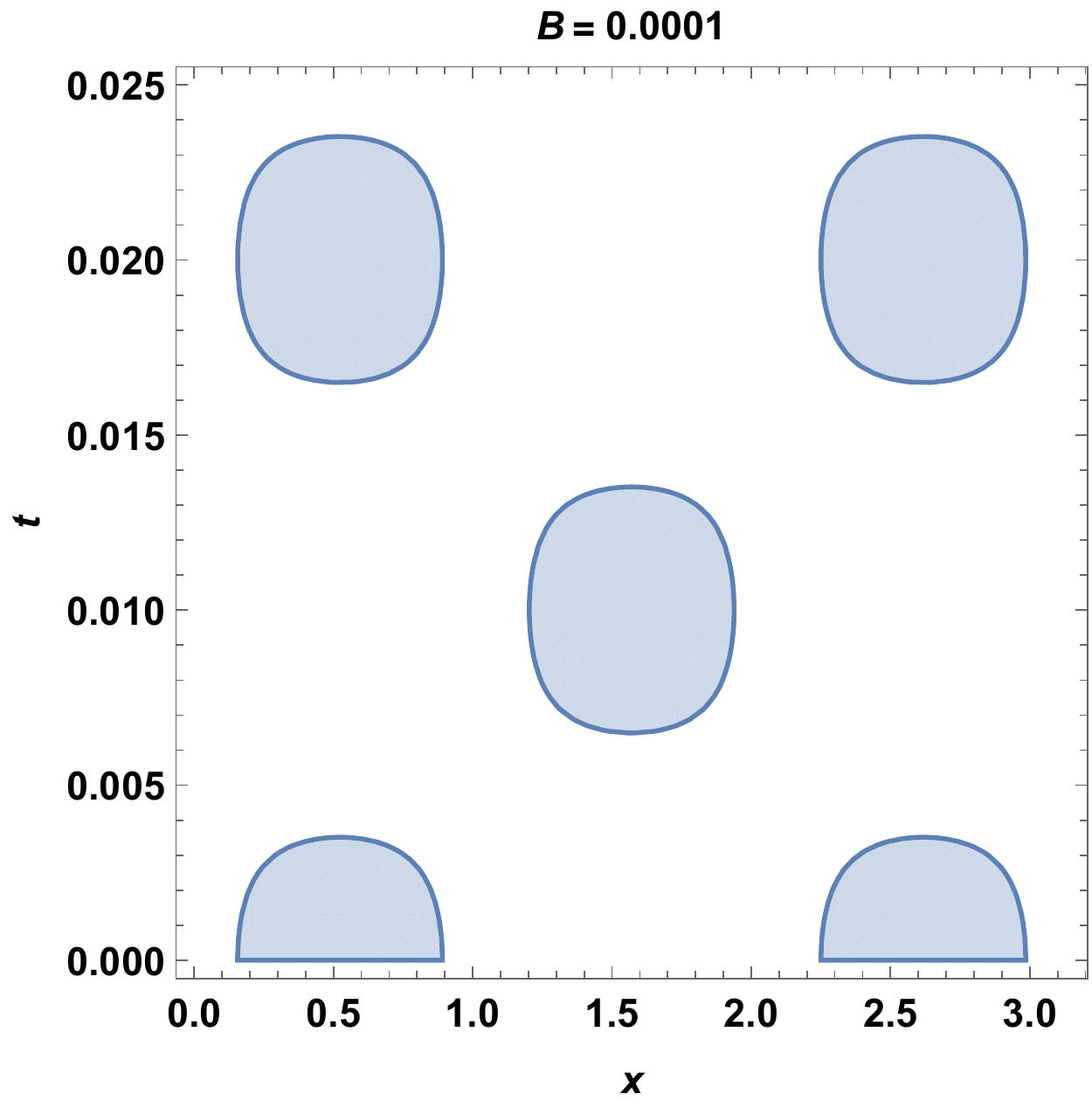}
	\end{subfigure}	
    \caption{Non-convexity region of the third torsional mode for a time-varying amplitude.}
    \label{fig:reg3}
\end{figure}

All these plots may also be read by assuming that the right and left pictures represent simultaneously the non-convexity intervals for each cable, as far as torsional
vibrations are involved: for any given $t>0$ one should cross horizontally the two pictures in order to find which part of the interval $(0,\pi)$ of the two cables
would be non-convex. In fact, the non-convexity regions are proper subsets of the slackening regions, see Proposition \ref{slackeningplate2}. Hence, the slackening
regions are slightly wider in the $x$-direction than the ``ellipses'' in the above plots. This fact is illustrated in Figure \ref{exact} where we compare the
non-convexity and slackening regions in the third torsional mode:

\begin{figure}[H]
\begin{center}
\includegraphics[scale=0.5]{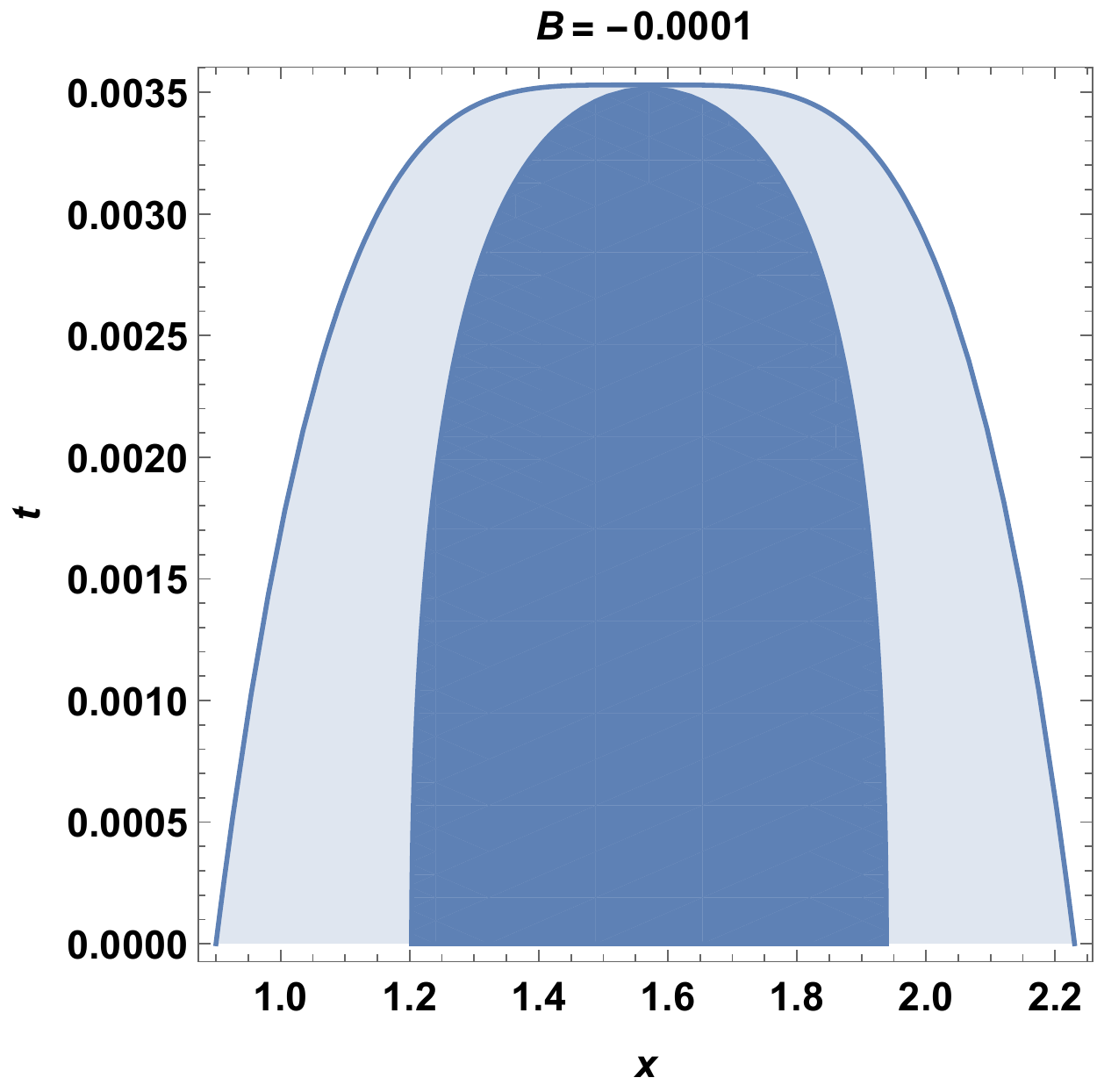}
\caption{Slackening region for the third torsional mode}\label{exact}
\end{center}
\end{figure}

\section{Proof of Theorem \ref{theoshortening}}\label{proof1}

The first step is a technical lemma which involves hyper-geometric integrals:

\begin{lemma} \label{signo_int}
For odd \textnormal{$n \in \mathbb{N}$} and $0<\mu<\frac{2}{5}$ we have
\textnormal{
\begin{equation} \label{integral_pos}
G_{n}:= \int\limits_{0}^{\pi / 2} \dfrac{t \sin(nt)}{\sqrt{1 + \left(\mu t \right)^2}} \, dt\ \left\{\begin{array}{ll}
>0 & \mbox{if }n \equiv 1 \ (\text{mod} \ 4)\\
<0 & \mbox{if }n \equiv 3 \ (\text{mod} \ 4).
\end{array}\right.
\end{equation}}
\end{lemma}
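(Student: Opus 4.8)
The plan is to kill the oscillation by integrating by parts twice, thereby isolating a boundary term that already carries the claimed sign, and then bounding everything else by a crude monotonicity estimate. Write $m:=(n-1)/2$, so that $\sin(n\pi/2)=(-1)^m$ and the sign to be established is precisely $(-1)^m$ (positive when $n\equiv1$, negative when $n\equiv3\pmod 4$). Set $f(t):=t(1+(\mu t)^2)^{-1/2}$, so that $G_n=\int_0^{\pi/2}f(t)\sin(nt)\,dt$. A direct computation gives the pleasant simplification $f'(t)=(1+(\mu t)^2)^{-3/2}=:\phi(t)$, a positive, decreasing function with $\phi(0)=1$.

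First I would integrate by parts once. Since $f(0)=0$ and $\cos(n\pi/2)=0$ for odd $n$, both boundary contributions vanish and one obtains $n\,G_n=\int_0^{\pi/2}\phi(t)\cos(nt)\,dt$. Integrating by parts a second time, the boundary term now survives: because $\sin(n\pi/2)=(-1)^m$ and $\sin 0=0$, this yields
\[
n^2 G_n=(-1)^m\,\phi(\pi/2)-\int_0^{\pi/2}\phi'(t)\sin(nt)\,dt .
\]
The first term has the desired sign $(-1)^m$ and magnitude $\phi(\pi/2)$; the whole task reduces to showing that the remaining integral cannot overturn it.

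To control the remainder I would use the total-variation bound. Since $\phi$ is monotone decreasing, $|\phi'|=-\phi'$ and therefore
\[
\left|\int_0^{\pi/2}\phi'(t)\sin(nt)\,dt\right|\le\int_0^{\pi/2}\bigl(-\phi'(t)\bigr)\,dt=\phi(0)-\phi(\pi/2)=1-\phi(\pi/2).
\]
Consequently $n^2 G_n=(-1)^m\phi(\pi/2)+R$ with $|R|\le 1-\phi(\pi/2)$, and the sign of $G_n$ equals $(-1)^m$ as soon as $\phi(\pi/2)>1-\phi(\pi/2)$, i.e.\ $\phi(\pi/2)>\tfrac12$. This is exactly where the hypothesis enters: $\phi(\pi/2)>\tfrac12$ is equivalent to $1+\mu^2\pi^2/4<2^{2/3}$, that is $\mu<\tfrac{2}{\pi}\sqrt{2^{2/3}-1}\approx0.488$, and since $\phi(\pi/2)$ is decreasing in $\mu$ the worst case $\mu=\tfrac25$ already gives $\phi(\pi/2)\approx0.607>\tfrac12$. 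Thus $0<\mu<\tfrac25$ comfortably guarantees domination, and the claimed strict sign follows (the bound on $R$ is in fact strict, since $|\sin(nt)|<1$ on a set of positive measure where $\phi'\neq0$).

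The only genuinely delicate point is this last one: the threshold $\tfrac25$ must not exceed $\tfrac{2}{\pi}\sqrt{2^{2/3}-1}$ for the boundary term to dominate, and it is reassuring that $\tfrac25$ leaves a margin (matching the remark after Theorem \ref{theoshortening} that the hypothesis is not expected to be sharp). I expect the main obstacle to be purely bookkeeping in the two integrations by parts — making sure the boundary terms vanish at the right step (using the parity of $n$) and that the surviving boundary term carries $(-1)^m$ — rather than any hard estimate, since the remainder is dispatched by the crude monotonicity bound on $\phi$.
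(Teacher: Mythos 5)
Your proof is correct, and it takes a genuinely different — and substantially shorter — route than the paper. The key identity $\frac{d}{dt}\bigl(t(1+(\mu t)^2)^{-1/2}\bigr)=(1+(\mu t)^2)^{-3/2}=:\phi(t)$ checks out, both boundary terms do vanish in the first integration by parts (using $f(0)=0$ and $\cos(n\pi/2)=0$ for odd $n$), the surviving boundary term in the second integration by parts carries exactly $\sin(n\pi/2)=(-1)^{(n-1)/2}$, which matches the claimed sign pattern, and the total-variation bound $\bigl|\int_0^{\pi/2}\phi'(t)\sin(nt)\,dt\bigr|\le\phi(0)-\phi(\pi/2)$ is valid since $\phi$ is decreasing. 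The domination condition $\phi(\pi/2)>\tfrac12$ reduces to $\mu<\tfrac{2}{\pi}\sqrt{2^{2/3}-1}\approx0.488$, which strictly contains $\mu<\tfrac25$, and the resulting inequality $n^2|G_n|\ge 2\phi(\pi/2)-1>0$ is already strict, so your parenthetical about $|\sin(nt)|<1$ is not even needed. By contrast, the paper expands $(1+(\mu t)^2)^{-1/2}$ in a binomial series, computes the moments $I_{n,k}=\int_0^{\pi/2}t^{2k+1}\sin(nt)\,dt$ via a recursion, establishes their sign by showing that the partial sums $J_{n,k}$ of the Taylor series of $\cos(n\pi/2)$ have sign $(-1)^k$ (Leibniz criterion plus a parity case distinction on $k$), and then controls the series in $\mu$ with a differentiated geometric series, arriving at the condition $\mu\pi/2<0.65$, i.e.\ $\mu\lesssim0.414$. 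Your argument avoids all series manipulation, yields a slightly wider admissible range for $\mu$, and makes transparent where the hypothesis $\mu<\tfrac25$ enters; what it gives up is the explicit sign information on each coefficient $I_{n,k}$, which the paper's proof produces as a by-product (via \eqref{lemaint4}) and reuses in the estimate \eqref{lemaint6}. As a self-contained proof of Lemma \ref{signo_int}, yours is complete and preferable.
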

\begin{proof}
For $t \in \mathbb{R}$ such that $| t | < \dfrac{1}{\mu}$, the following power expansion is valid:
$$
\dfrac{1}{\sqrt{1 + \left(\mu t \right)^2}} = \sum_{k=0}^{\infty} \binom{-1 / 2}{k} (\mu t)^{2k}.
$$
Therefore, since $\mu < \frac{2}{5}$, for all $t \in \left[0, \dfrac{\pi}{2} \right]$ we can write:
\begin{equation} \label{lemaint0}
G_{n} = \sum_{k=0}^{\infty} \binom{-1 / 2}{k} I_{n,k} \ \mu^{2k}\quad\forall n \in \mathbb{N},\quad\mbox{where}\quad
I_{n,k} = \int\limits_{0}^{\pi / 2} t^{2k + 1} \sin(nt)\, dt\quad\forall n,k \in \mathbb{N}.
\end{equation}
Since we are considering odd values of $n \in \mathbb{N}$, after integrating by parts twice $I_{n,k}$ in \eqref{lemaint0} we obtain:
$$
I_{n,k} = -\dfrac{2k(2k+1)}{n^2} I_{n,k-1} + \delta(n) \dfrac{2k+1}{n^2} \left( \dfrac{\pi}{2} \right)^{2k}\quad\forall k \geq 1,
$$
with $I_{n,0} = \dfrac{\delta(n)}{n^2}$, for every odd $n \in \mathbb{N}$, and:
\begin{equation} \label{lemaint2}
\delta(n) = \left\{
\begin{aligned}
& 1 \ \ \mbox{if } n \equiv 1 \ (\text{mod} \ 4) \\
- & 1 \ \ \mbox{if } n \equiv 3 \ (\text{mod} \ 4). \\
\end{aligned}
\right.
\end{equation}
An inductive argument over $k \geq 1$ allows then to deduce
\begin{equation} \label{lemaint3}
I_{n,k} = \delta(n) \sum_{j=0}^{k} \dfrac{(-1)^{k + j}}{n^{2(k+1-j)}} \dfrac{(2k+1)!}{(2j)!} \left( \dfrac{\pi}{2} \right)^{2j}\quad\forall k \geq 1.
\end{equation}
Our first claim is that $I_{n,k} > 0$ when \textnormal{$n \equiv 1 \ (\text{mod} \ 4)$}, and that \textnormal{$I_{n,k} < 0$} when
\textnormal{$n \equiv 3 \ (\text{mod} \ 4)$}, for all $k \in \mathbb{N}$. But, according to \eqref{lemaint2} and the form of expression \eqref{lemaint3},
it suffices to show that:
\begin{equation} \label{lemaint4}
J_{n,k} := (-1)^{k} \sum_{j=0}^{k} \dfrac{(-1)^{j}}{(2j)!} \left( \dfrac{n \pi}{2} \right)^{2j} > 0\quad\forall n \in \mathbb{N} \ \text{odd}, \ k \geq 1.
\end{equation}
In order to prove \eqref{lemaint4}, we distinguish two cases.\par\noindent
$\bullet$ \textbf{Case (A):} $k > \dfrac{\sqrt{1 + (n \pi)^2} - 7}{4}$. Since $n \in \mathbb{N}$ is odd, we know that
\begin{equation} \label{lemaint5}
0 = \cos\left( \dfrac{n \pi}{2} \right) = \sum_{j=0}^{k} \dfrac{(-1)^{j}}{(2j)!}
\left( \dfrac{n \pi}{2} \right)^{2j} + \sum_{j=k+1}^{\infty} \dfrac{(-1)^{j}}{(2j)!} \left( \dfrac{n \pi}{2} \right)^{2j}.
\end{equation}
We put $a_{j} = \dfrac{1}{(2j)!} \left( \dfrac{n \pi}{2} \right)^{2j}$ and observe that, for every $j \geq 1$,
$$
\dfrac{a_{j}}{a_{j-1}} = \dfrac{(n\pi)^2}{8j(2j-1)} < 1 \ \ \Longleftrightarrow \ \ j > \dfrac{\sqrt{1 + (n \pi)^2} + 1}{4}.
$$
Hence, the Leibniz criterion can be applied to the \textit{tail} series $\sum_{j=k+1}^\infty(-1)^ja_j$ if $j>\tfrac{\sqrt{1+(n\pi)^2}+1}{4}$.
But since the first ratio to be considered is $a_{k+2}/a_{k+1}$, the Leibniz criterion may be applied whenever
$$
k + 2 > \dfrac{\sqrt{1 + (n \pi)^2} + 1}{4} \ \ \Longleftrightarrow \ \ k > \dfrac{\sqrt{1 + (n \pi)^2} - 7}{4},
$$
which is precisely the case considered. Therefore, the tail series $\sum_{j=k+1}^{\infty} (-1)^{j} a_{j}$ has the same sign as $(-1)^{k + 1}$.
In view of \eqref{lemaint5}, the finite sum $\sum_{j=0}^{k} (-1)^{j} a_{j}$ has the sign of $(-1)^{k}$, that is, the opposite sign of the tail series.
In turn, $J_{n,k} > 0$ in this case, for all odd values of $n \in \mathbb{N}$.\par\noindent
$\bullet$ \textbf{Case (B):} $k < \frac{\sqrt{1 + (n \pi)^2} - 7}{4}$. We distinguish here further between odd and even values of $k$.
For even $k \in \mathbb{N}$, we may write
$$
J_{n,k}= 1 + \sum_{i=1}^{k/2}(a_{2i}-a_{2i-1})
$$
and, since $2i\leq k$, all the terms in the sum are positive in view of the assumption of case B. Therefore, $J_{n,k}>0$ for even $k$.\par
For odd $k\in\mathbb{N}$, we may write
$$
J_{n,k}=\sum_{i=0}^{\frac{k-1}{2}}(a_{2i+1}-a_{2i})
$$
and, since $2i+1\leq k$, all the terms in the sum are positive in view of the assumption of case B. Therefore, $J_{n,k}>0$ also for odd $k$.\par
Inequality \eqref{lemaint4} is so proved for all $n$ and $k$.
Let us now fix an integer $n \equiv 1 \ (\text{mod} \ 4)$ (the case when $n \equiv 3 \ (\text{mod} \ 4)$ follows a completely analogous
procedure). As a consequence of \eqref{lemaint4}, we obtain the upper bound:
\begin{equation} \label{lemaint6}
I_{n,k} = - \dfrac{2k(2k+1)}{n^2} \dfrac{(2k-1)!}{n^{2k}} J_{n,k-1} + \dfrac{2k+1}{n^2} \left( \dfrac{\pi}{2} \right)^{2k} <
\dfrac{2k+1}{n^2} \left( \dfrac{\pi}{2} \right)^{2k}\quad\forall k \geq 1.
\end{equation}
Back to \eqref{lemaint0}, we may write:
\begin{equation} \label{lemaint7}
G_{n} = \dfrac{1}{n^2} + \sum_{k=1}^{\infty} \binom{-1 / 2}{k} \left[\sum_{j=0}^{k} \dfrac{(-1)^{k + j}}{n^{2(k+1-j)}}
\dfrac{(2k+1)!}{(2j)!} \left( \dfrac{\pi}{2} \right)^{2j} \right] \mu^{2k}.
\end{equation}
We observe that the binomial coefficient $\binom{-1 / 2}{k}$ is negative when $k$ is odd and positive otherwise. Furthermore, if we put
$$
b_{k} := \left| \binom{-1 / 2}{k} \right|\quad\forall k \geq 1,
$$
then one has that
\begin{equation} \label{lemaint8}
\dfrac{b_{k+1}}{b_{k}} = \dfrac{2k+1}{2k+2} < 1\quad\forall k \geq 1,
\end{equation}
so that $b_{k} \leq b_{1} = 1/2$ for all $k \geq 1$. Since in \eqref{lemaint7} all the terms in the sum over $j \in \{0,\ldots, k \}$ are strictly positive
as a consequence of \eqref{lemaint4}, by exploiting \eqref{lemaint6} and \eqref{lemaint8} we obtain
$$
G_{n} > \dfrac{1}{n^2} - \sum_{k=1 \atop k\ \textrm{odd}}^{\infty} \dfrac{1}{2} \dfrac{2k+1}{n^2} \left( \dfrac{\mu \pi}{2} \right)^{2k} =
\dfrac{1}{n^2} \left[ 1 - \dfrac{1}{2} \sum_{p=0}^{\infty} (4p+3) \left( \dfrac{\mu \pi}{2} \right)^{4p+2} \right]\, .
$$
For every $x \in (-1,1)$, the geometric series can be differentiated term by term, that is,
$$
\dfrac{d}{dx} \left( \sum_{p=0}^{\infty} x^{4p+3} \right) = \sum_{p=0}^{\infty} (4p + 3) x^{4p+2} =
\dfrac{d}{dx} \left( \dfrac{x^3}{1 - x^4} \right) = \dfrac{x^6 + 3x^2}{(1 - x^4)^2}\qquad\forall x \in (-1,1).
$$
Hence, we finally infer that
\begin{equation} \label{lemaint9}
G_{n} > \dfrac{1}{n^2} \left[ 1 - \dfrac{\left( \dfrac{\mu \pi}{2} \right)^{6} +
3 \left( \dfrac{\mu \pi}{2} \right)^{2}}{2 \left[ 1 - \left( \dfrac{\mu \pi}{2} \right)^{4} \right]^2} \right]\, .
\end{equation}
Some computations show that the right-hand side of \eqref{lemaint9} is strictly positive (at least) when $\dfrac{\mu \pi}{2} < 0.65$,
so in particular, when $\mu < 0.4$. This concludes the proof.
\end{proof}

For the sake of illustration, in Table \ref{table:0} we give the numerical approximation of $G_{n}$, for odd values of $n \in \N$ up to $n=19$,
when $\mu=1.739 \times 10^{-3}$ (as in \eqref{numerical}):
\begin{table}[H]
	\centering
	\begin{tabular}{ | c | c | c | c | c | c | c | c | c | c | c | c | c |}
		\hline
		$n$ & 1 & 3 & 5 & 7 & 9 & 11 & 13 & 15 & 17 & 19 \\ \hline
		$G_{n}$ & 0.9999 & -0.1111 & 0.0399 & -0.0204 & 0.0123 & -0.0082 & 0.0059 & -0.0044  & 0.0034  & -0.0027 \\ \hline
	\end{tabular}
	\captionsetup{justification=centering}
	\captionsetup{font={footnotesize}}
	\captionof{table}{Numerical values of the integral $G_{n}$ in \eqref{integral_pos}, for some odd values of $n \in \N$.}
	\label{table:0}
\end{table}

In fact, for every $\mu \geq 0$ we know that $G_{n} \longrightarrow 0$ as $n \longrightarrow \infty$, as a direct consequence of the Riemann-Lebesgue Theorem.
This is quite visible also in Table \ref{table:0}.\par
Our second technical result gives a qualitative property of the graph of $\Gamma_n(\rho)$.

\begin{lemma}\label{prop_gamma}
For all integer $n\ge1$, the map $\rho\mapsto\Gamma_n(\rho)$ is strictly convex.
\end{lemma}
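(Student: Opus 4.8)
The plan is to compute $\Gamma_n''(\rho)$ explicitly and show it is strictly positive for every $\rho\in\R$. The key structural observation is that, writing
$$F(x,\rho):=\frac{q}{H}\Big(x-\frac{L}{2}\Big)+\frac{n\pi}{L}\rho\cos\Big(\frac{n\pi x}{L}\Big),$$
the integrand in \eqref{gamma2} is exactly $\sqrt{1+F(x,\rho)^2}$, and for each fixed $x$ the map $\rho\mapsto F(x,\rho)$ is \emph{affine}. Since $s\mapsto\sqrt{1+s^2}$ is convex (its second derivative equals $(1+s^2)^{-3/2}>0$), the composition $\rho\mapsto\sqrt{1+F(x,\rho)^2}$ is convex for each $x$, and integrating in $x$ already yields convexity of $\Gamma_n$. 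The actual work lies in upgrading this to \emph{strict} convexity.

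First I would justify differentiating under the integral sign twice in $\rho$. For $\rho$ ranging in an arbitrary compact interval $[-R,R]$, the integrand together with its first two $\rho$-derivatives is continuous in $(x,\rho)$ on the compact set $[0,L]\times[-R,R]$, hence bounded there; the classical Leibniz rule therefore applies, so $\Gamma_n\in C^2(\R)$ and differentiation commutes with the integral.

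Then I would carry out the computation. Setting $b(x):=\partial_\rho F=\tfrac{n\pi}{L}\cos(\tfrac{n\pi x}{L})$ (independent of $\rho$), a direct calculation gives
$$\partial_\rho\sqrt{1+F^2}=\frac{F\,b}{\sqrt{1+F^2}},\qquad \partial_\rho^2\sqrt{1+F^2}=\frac{b^2}{(1+F^2)^{3/2}},$$
where the second identity follows after collecting terms and using $(1+F^2)-F^2=1$. Integrating over $[0,L]$ produces
$$\Gamma_n''(\rho)=\int_0^L\frac{\big(\tfrac{n\pi}{L}\big)^2\cos^2\big(\tfrac{n\pi x}{L}\big)}{\big(1+F(x,\rho)^2\big)^{3/2}}\,dx.$$

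Finally I would settle the sign. The integrand is continuous and nonnegative on $[0,L]$, and it vanishes only at the finitely many points where $\cos(\tfrac{n\pi x}{L})=0$; everywhere else it is strictly positive. A continuous nonnegative function that is strictly positive on a set of positive measure has strictly positive integral, so $\Gamma_n''(\rho)>0$ for every $\rho\in\R$, which is exactly strict convexity. I expect no serious obstacle: the only two points requiring any care are the justification of differentiation under the integral (routine, via boundedness on compacta) and the clean algebraic simplification that collapses the second derivative of the integrand to $b^2(1+F^2)^{-3/2}$.
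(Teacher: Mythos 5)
Your proposal is correct and follows essentially the same route as the paper: differentiate twice under the integral sign and observe that $\Gamma_n''(\rho)=\int_0^L (n\pi/L)^2\cos^2(n\pi x/L)\,(1+F(x,\rho)^2)^{-3/2}\,dx>0$. Your added remarks (justifying the Leibniz rule on compacta, and noting that the nonnegative integrand vanishes only at finitely many points) only make explicit steps the paper leaves implicit.
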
	
\begin{proof} It suffices to analyze the case when $L = \pi$, and so:
$$
\Gamma_{n}(\rho) = \int\limits_{0}^{\pi} \sqrt{1 + \left[\dfrac{q}{H}\left(x - \dfrac{\pi}{2} \right) +n \rho \cos(nx)  \right] ^2} \ dx - L_{c}
\quad\forall \rho \in \mathbb{R}, \ \ \forall n \in \N.
$$
After differentiating under the integral sign we obtain the following:
\begin{equation} \label{dergamma2}
\Gamma_{n}'(\rho) = \int\limits_{0}^{\pi} \dfrac{n \cos(nx) \left[\dfrac{q}{H}
\left(x - \dfrac{\pi}{2} \right) +n \rho \cos(nx)  \right]}{\sqrt{1 + \left[\dfrac{q}{H}\left(x - \dfrac{\pi}{2} \right) +n \rho \cos(nx)  \right]^2}}  \ dx,
\end{equation}
$$
\Gamma_{n}''(\rho)=\int\limits_{0}^{\pi}\dfrac{[n\cos(nx)]^2}{\left[1+\left(\dfrac{q}{H}\left(x-\dfrac{\pi}{2}\right)+n\rho\cos(nx)\right)^2\right]^{3/2}}\ dx,
$$
for $\rho \in \mathbb{R}$ and $n \in \N$. Therefore, $\Gamma_{n}''(\rho) > 0$, for every $n \geq 1$ and $\rho \in \mathbb{R}$, so that $\Gamma_{n}$ is a
strictly convex function all over $\mathbb{R}$.\end{proof}

In view of \eqref{dergamma2}, we see that
\begin{equation} \label{dergamma4}
\Gamma_{n}'(0) = \dfrac{nq}{H} \int\limits_{0}^{\pi} \dfrac{\left(x - \dfrac{\pi}{2} \right)
\cos(nx)}{\sqrt{1 + \left(  \dfrac{q}{H} \right)^2  \left(x - \dfrac{\pi}{2} \right)^2}}  \ dx.
\end{equation}

If $n$ is even, then the integrand in \eqref{dergamma4} is skew-symmetric with respect to $x=\pi/2$ and hence
\begin{equation}\label{even0}
\Gamma_{n}'(0)=0\qquad\mbox{for even }n\, .
\end{equation}

If $n$ is odd, then we make the substitution $t = x - \dfrac{\pi}{2}$ and we note that
$$
\cos \left( nt + \dfrac{n\pi}{2} \right) =
\left\{
\begin{aligned}
- &\sin(nt), \ \mbox{if } n \equiv 1 \ (\text{mod} \ 4) \\
&\sin(nt), \ \mbox{if } n \equiv 3 \ (\text{mod} \ 4), \\
\end{aligned}
\right.
$$
for all $n \geq 1$ and $t \in \left[ -\dfrac{\pi}{2}, \dfrac{\pi}{2} \right]$. Therefore, after setting $\mu=\dfrac{q}{H}$, we see that
$\Gamma_{n}'(0)=-2\mu n \delta(n) G_n$ if $n$ is odd. From \eqref{lemaint2} and Lemma \ref{signo_int} we then infer that
\begin{equation}\label{odd0}
\Gamma_{n}'(0)<0\qquad\mbox{for odd }n\, .
\end{equation}

Since $\Gamma_n(0)=0$ for all $n$, Theorem \ref{theoshortening} follows by combining Lemma \ref{prop_gamma} with \eqref{even0} and \eqref{odd0}.
\par\bigskip\noindent
\textbf{Acknowledgments.} The first author is partially supported by the PRIN project {\em Partial differential equations and related analytic-geometric
inequalities} and by GNAMPA-INdAM.

\bibliographystyle{abbrv}
\bibliography{references}
\end{document}